\renewcommand\normalsize{%
   \@setfontsize\normalsize\@xpt{14}%
   \abovedisplayskip 10\p@ \@plus2\p@ \@minus5\p@
   \abovedisplayshortskip \z@ \@plus3\p@
   \belowdisplayshortskip 6\p@ \@plus3\p@ \@minus3\p@
   \belowdisplayskip \abovedisplayskip
   \let\@listi\@listI}
\renewcommand\small{%
   \@setfontsize\small\@ixpt{12}%
   \abovedisplayskip 8.5\p@ \@plus3\p@ \@minus4\p@
   \abovedisplayshortskip \z@ \@plus2\p@
   \belowdisplayshortskip 4\p@ \@plus2\p@ \@minus2\p@
   \def\@listi{\leftmargin\leftmargini
               \topsep 4\p@ \@plus2\p@ \@minus2\p@
               \parsep 2\p@ \@plus\p@ \@minus\p@
               \itemsep \parsep}%
   \belowdisplayskip \abovedisplayskip
}
\renewcommand\footnotesize{%
   \@setfontsize\footnotesize\@viiipt{10}%
   \abovedisplayskip 6\p@ \@plus2\p@ \@minus4\p@
   \abovedisplayshortskip \z@ \@plus\p@
   \belowdisplayshortskip 3\p@ \@plus\p@ \@minus2\p@
   \def\@listi{\leftmargin\leftmargini
               \topsep 3\p@ \@plus\p@ \@minus\p@
               \parsep 2\p@ \@plus\p@ \@minus\p@
               \itemsep \parsep}%
   \belowdisplayskip \abovedisplayskip
}
\renewcommand\scriptsize{\@setfontsize\scriptsize\@viipt\@viiipt}
\renewcommand\tiny{\@setfontsize\tiny\@vpt\@vipt}
\renewcommand\large{\@setfontsize\large\@xipt{15}}
\renewcommand\Large{\@setfontsize\Large\@xiipt{16}}
\renewcommand\LARGE{\@setfontsize\LARGE\@xivpt{18}}
\renewcommand\huge{\@setfontsize\huge\@xxpt{30}}
\renewcommand\Huge{\@setfontsize\Huge{24}{36}}
\renewcommand{\cite}{\citep}
\colorlet{TufteRed}{red!80!black}
\definecolor{theblue}{RGB}{0,0,180}
\colorlet{thered}{TufteRed}
\newcommand{\email}[1]{\textsf{\small #1}}
\renewcommand{\mat}{\boldsymbol}
\newcommand{\meye}{\mI}
\newcommand{\mLhat}{\mat{\tilde{L}}}
\DeclareMathOperator{\vol}{vol}
\newcommand{\blower}{\underline{b}}
\newcommand{\bupper}{\overline{b}}
\newcommand{\lmin}{\underline{\lambda}}
\newcommand{\lmax}{\overline{\lambda}}
\newcommand{\rhomin}{\underline{\rho}}
\newcommand{\rhomax}{\overline{\rho}}
\newcommand{\kappamin}{\underline{\kappa}}
\newcommand{\kappamax}{\overline{\kappa}}
\newcommand{\dmax}{d_{\text{max}}}
\newcommand{\dupper}{\overline{d}}
\newcommand{\dlower}{\underline{d}}
\newcommand{\omegaupper}{\overline{\omega}}
\newcommand{\omegalower}{\underline{\omega}}
\newcommand{\lasti}{_{-1}}
\newcommand{\gupper}{\overline{g}}
\newcommand{\glower}{\underline{g}}
\newcommand{\hupper}{\overline{h}}
\newcommand{\hlower}{\underline{h}}
\renewcommand*{\backref}[1]{}
\renewcommand*{\backrefalt}[4]{%
  \ifcase #1 %
    No citations.%
  \or
    Cited on page #2.%
  \else
    Cited on pages #2.%
  \fi
}
\title{Fast matrix computations for pair-wise and column-wise\\
commute times and Katz scores}
\author{
Francesco Bonchi\\
Yahoo! Research\\ 
Barcelona, Spain\\
\email{bonchi@yahoo-inc.com}
\and 
Pooya Esfandiar\\
Univ. of British Columbia\\
Vancouver, BC \\
\email{pooyae@cs.ubc.ca}
\and 
David F.~Gleich\\
Sandia National\\ Laboratories\\
Livermore, CA\\
\email{dfgleic@sandia.gov}
 \and 
Chen Greif\\
Univ. of British Columbia\\
Vancouver, BC\\
\email{greif@cs.ubc.ca}
\and 
Laks V.~S.~Lakshmanan\\
Univ. of British Columbia\\
Vancouver, BC\\
\email{laks@cs.ubc.ca}
}
\begin{document}

\maketitle

\begin{abstract}
We first explore methods for approximating the commute time
and Katz score between a pair of nodes.  These methods are based on the
approach of matrices, moments, and quadrature
developed in the numerical linear algebra community.
They rely on the Lanczos process
and provide upper and lower bounds on an estimate
of the pair-wise scores. We also explore methods to
approximate the commute times and Katz scores
from a node to all other nodes in the graph.  Here,
our approach for the commute times is based on
a variation of the conjugate gradient algorithm, and it provides an estimate of all the 
diagonals of the
inverse of a matrix. Our technique for the Katz
scores is based on exploiting an empirical localization
property of the Katz matrix.  We adopt algorithms used
for personalized PageRank computing to these Katz scores
and theoretically show that this approach is convergent.
We evaluate these methods on 17 real world graphs
ranging in size from 1000 to 1,000,000 nodes.
Our results show that our pair-wise commute time method
and column-wise Katz algorithm both have attractive
theoretical properties and empirical performance.
\end{abstract}

\section{Introduction} \label{sec:intro}

Commute times~\cite{Gobel1974-random-walks} and
Katz scores~\cite{katz} are two topological measures
defined between any pair of vertices in a graph that
capture their relationship due to the link
structure.  Both of these
measures have become important
because of the their use in
social network analysis as well as applications
such as link prediction~\cite{cikm03},
anomalous link detection~\cite{rattigan05},
recommendation~\cite{sarkar-moore07}, and
clustering~\cite{saerens}.

For example, \citet{cikm03} identify a variety of topological measures as features for
link prediction: the problem of predicting the likelihood of users/entities forming
new connections in the future, given the current state of the network. The measures they
studied fall into two categories -- neighborhood-based measures and path-based measures.
The former are cheaper to compute, yet the latter are more effective at link
prediction. Katz scores were among the most effective path-based measures studied
by \citet{cikm03}, and the commute time also
performed well.

Standard algorithms to compute these measures between all
pairs of nodes are often based on direct solution methods and 
require cubic time and quadratic space in the number
of nodes of the graph.  Such algorithms are impractical for modern
networks with millions of vertices and tens of millions of edges.
We explore algorithms to compute a targeted subset of scores that
do scale to modern networks.
%In what follows, we formally define these measure,
%and simultaneously both the contributions and overall structure
%of the paper.

Katz scores measure the affinity between nodes via
a weighted sum of the number of paths between them.
Formally, the Katz score between node $i$ and $j$ is
\[
K_{i,j} = \sum_{\ell=1}^{\infty} \alpha^\ell \mathrm{paths}_\ell(x,y) ,
\]
where $\text{paths}_\ell(x,y)$ denotes the number
of paths of length $\ell$ between $i$ to $j$ and $\alpha < 1$ is an
attenuation parameter.  Now, let $\mA$ be the symmetric adjacency matrix,
corresponding to a undirected and connected graph,
and recall that $(\mA^\ell)_{i,j}$ is the number of paths
between node $i$ and $j$.  Then computing the Katz scores
for all pairs of nodes is equivalent to the following computation:
\[ \label{eq:katz_matrix}
\mK = \alpha \mA + \alpha^2 \mA^2
+ \cdots = (\mI - \alpha \mA)^{-1} - \mI. \]
Herein, we refer to $\mK$ as the \emph{Katz} matrix.
We shall only study this problem when $\eye - \alpha \mA$
is positive definite, which occurs when 
$\alpha < 1/\sigma_{\max}(\mA)$ and also corresponds
to the case where the series expansion converges.  

In order to define the commute time between nodes, we
must first define the hitting time between nodes.
Formally, the commute time between nodes is defined as the sum of
hitting times from $i$ to $j$ and from $j$ to $i$,
and the hitting time from node $i$ to $j$ is the
expected number of steps for a random walk
started at $i$ to visit $j$ for the first time.
The hitting time is computed via first-transition
analysis on the random walk transition matrix associated
with a graph.  To be precise, let $\mA$, again,
be the symmetric adjacency
matrix.  Let $\mD$ be the diagonal matrix of degrees:
\[ D_{i,j} = \begin{cases} \sum_{v} A_{i,v} & i = j \\ 0 & \text{otherwise}.
             \end{cases} \]
The random walk transition matrix is given by $\mP = \mD^{-1} \mA$.
Let $H_{i,j}$ be the hitting time from node $i$ and $j$.
Based on the Markovian nature of a random walk, $H_{i,j}$
must satisfy:
\[ H_{i,j} = 1 + \sum_{v} H_{i,v} P_{v, j}
\qquad \text{ and } \qquad H_{i,i} = 0. \]
That is, the hitting time between
$i$ and $j$ is 1 more than the hitting time between
$i$ and $v$, weighted by the probability of transitioning
between $v$ and $j$, for all $v$.  The minimum non-negative solution $\mH$
that obeys this equation is thus the matrix of hitting times.
The commute time between node $i$ and $j$ is then:
\[ C_{i,j} = H_{i,j} + H_{j,i}. \]
As a matrix $\mC = \mH + \mH^T$, and we refer to $\mC$
as the \emph{commute time} matrix.
An equivalent expression follows from
exploiting a few relationships with the combinatorial graph
Laplacian matrix:
$\mL = \mD - \mA$~\cite{fouss+07}.
Each element
$C_{i,j} = \text{Vol}(G) (L^\dagger_{i,i} - 2 L^\dagger_{i,j} + L^\dagger_{j,j})$
where $\text{Vol}(G)$ is the sum of elements in $\mA$ and
$\mL^\dagger$ is the pseudo-inverse of $\mL$.  The null-space
of the combinatorial graph Laplacian has a well known expression
in terms of the connected components of the graph $G$.
This relationship allows us to write
\[ \mL^\dagger = (\underbrace{\mL + \frac{1}{n} \ve \ve^T}_{\mLhat})^{-1} - \frac{1}{n} \ve \ve^T \]
for \emph{connected} graphs~\cite{saerens},
where $\ve$ is the vector of all
ones, and $n$ is the number of nodes in the graph.
The commute time between nodes in different connected
components is infinite, and thus we only need to consider
connected graphs.
We summarize the notation thus far, and a few subsequent
definitions in Table~\ref{tab:notation}.

\begin{table}[t]
\caption{Notation}
\label{tab:notation}
\begin{small}
\begin{tabularx}{\linewidth}{@{\quad}l@{\quad}X}
\toprule
$\mA$ & the symmetric adjacency matrix for a connected, undirected graph \\
$\mD$ & the diagonal matrix of node degrees\\
$n$ & the number of vertices in $\mA$\\
$\ve$ & the vector of all ones \\
$\ve_i$ & a vector of zeros with a $1$ in the $i$th position\\
$\mL$ & the combinatorial Laplacian matrix of a graph, $\mL = \mD - \mA$\\
$\mLhat$ & the adjusted combinatorial Laplacian, $\mLhat = \mL + \frac{1}{n} \ve \ve^T$\\
$\alpha$ & the damping parameter in Katz\\
$\mK$ & the Katz matrix, $\mK = (\eye - \alpha \mA)^{-1}$ \\				
$\mC$ & the commute time matrix\\
$\mZ$ & a ``general'' matrix, usually $\eye - \alpha \mA$ or $\mLhat$ \\
%$\mP$ & the row-stochastic transition matrix
        %for a random walk on a graph, $\mP = \mD^{-1} \mA$ \\
\bottomrule
\end{tabularx}\end{small}
\vspace{-\baselineskip}
\end{table}

Computing either Katz scores or commute times between all pairs of nodes
involves inverting a matrix:
\[ (\eye - \alpha \mA)^{-1} \quad \text{ or } \quad  (\mL + \frac{1}{n} \ve \ve^T)^{-1}. \]
Standard algorithms for a matrix inverse require
$O(n^3)$ time and $O(n^2)$ memory.  Both of these requirements
are inappropriate for a
large network (see Section~\ref{sec:related} for a brief survey of
existing alternatives).
%Inspired by applications in
%anomalous link detection and recommendation,
%we focus on computing only a single Katz score or commute
%time and on approximating a column of these matrices, where
%the goal is to identify the most related nodes.
Inspired by applications in anomalous link detection
and recommendation, we focus on computing only a single Katz score or
commute time
and on approximating a column of these matrices.
In the former case, our
goal is to find the score for a given pair of nodes and in the latter, it is
to identify the
most related nodes for a given node.
In our vision, the pair-wise algorithms should help in cases
where random pair-wise data is queried, for instance when checking
random network connections, or evaluating user similarity scores
as a user explores a website.  For the column-wise
algorithms, recommending the most similar nodes to a query node
or predicting
the most likely links to a given query node are both
obvious applications.

One way to compute a single score -- what we term the
\emph{pair-wise problem} -- is to find
the value of a bilinear form:
\[ \vu^T \mZ^{-1} \vv , \]
where $\mZ = (\eye - \alpha \mA)$ or $\mZ = \mLhat$.
An interesting approach to estimate these bilinear forms,
and to derive computable upper and lower bounds on the
value, arises from the relationship between the Lanczos/Stieltjes
procedure and a quadrature rule~\cite{Golub93}.  This relationship
and the resulting algorithm for a quadratic form ($\vu^T \mZ^{-1} \vu$)
is described in Section~\ref{sec:mmq-bilinear}.
Prior to that, and because it will form the basis of a few algorithms that we use,
Section~\ref{sec:lanczos} first reviews the properties of the Lanczos method.
We state the pairwise procedure for commute times and Katz
scores in Section~\ref{sec:pairwise-commute}~and~\ref{sec:pairwise-katz}.

The \emph{column-wise problem} is to compute, or approximate, a column
of the matrix $\mC$ or $\mK$.  A column of the commute time matrix is:
\[ \vc_i = \mC \ve_i = \vol(G) [(\ve_i - \ve_v)^T \mLhat^{-1} (\ve_i - \ve_v) : 1 \le v \le n]. \]
A difficulty with this computation is that it requires \emph{all} of the diagonal
elements of $\mLhat^{-1}$, as well as the solution of the
linear system $\mLhat^{-1} \ve_i$.  We can use a property of the Lanczos
procedure and its relationship with the conjugate gradient algorithm to
solve $\mLhat^{-1} \ve_i$ \emph{and} estimate all of the
diagonals of the inverse simultaneously~\cite{Paige-1975-indefinite,Chantas-2008-cgdiag}.

A column of the Katz matrix is $\mK \ve_i$, which corresponds to solving
a single linear system:
\[ \vk_i = \mK \ve_i = (\eye - \alpha \mA)^{-1} \ve_i - \ve_i. \]
Empirically, we observe that the solutions of the Katz linear system
are often localized.  That is, there are only a few large elements in the
solution vector, and many negligible elements.
See Table~\ref{tab:katz-localization} for an example of this
localization over a few graphs.  In order to capitalize
on this phenomenon, we use a generalization of
``push''-style algorithms for personalized PageRank
computing~\cite{mcsherry2005-uniform,%
andersen2006-local,berkhin2007-bookmark}.  These methods
only access the adjacency information for a limited number
of vertices in the graph.  In Section~\ref{sec:columnwise-katz},
we explain the generalization of these methods, the adaptation
to Katz scores, and utilize the theory of coordinate descent
optimization algorithms to establish convergence.
As we argue in that section, these
techniques might also be called ``Gauss-Southwell'' methods,
based on historical precedents.

One of the advantages of Lanczos-based algorithms is that
the convergence is often much faster than a worst-case analysis
would suggest.  This means studying
their convergence by empirical means and on real data sets is important.
 We do so for 17 real-world
networks in Section~\ref{sec:experiments}, ranging in size from
approximation 1,000 vertices to 1,000,000 vertices.  These experiments
highlight both the strengths and weaknesses of our approaches, and
should provide a balanced picture of our algorithms.  In particular,
our algorithms run in seconds or milliseconds -- significantly faster than many 
techniques that use preprocessing
to estimate all of the scores simultaneously, which
can take minutes.

Straightforward approaches based on the conjugate gradient technique
are often competitive with our techniques. However, our algorithms have
other desirable properties, such as upper and lower bounds on the solution
or exploiting sparsity \emph{in the solution vector}, which conjugate gradient does not.
These experiments also shed light on a recent result from \citet{vonLuxburg-2010-commute}
on the relationship between commute time and the degree distribution.

Literature directly related to the problems we study and the techniques we propose is
discussed throughout the paper, in context. However, we have isolated a small set of
core related papers and discuss them in the next section.

In the spirit of reproducible research, we make our data, computational codes, and figure plotting codes available for others:
\url{http://cs.purdue.edu/homes/dgleich/publications/2011/codes/fast-katz/}.

\section{Related work} \label{sec:related}
This paper is about algorithms for computing commute times
and Katz scores over networks with hundreds of thousands
to millions of nodes. Most existing techniques
determine the scores among all pairs of nodes simultaneously
\cite{Acar2009-link-prediction,Wang2007-link-prediction,sarkar-moore07}
(discussed below).
These methods tend to
involve some preprocessing of the graph using a one-time, rather
expensive, computation. We instead focus on quick
estimates of these measures between a single pair of nodes and
between a single node and all other nodes in the graph.
In this vein, a recent paper
\cite{li-etal-sdm10} studies efficient computation of SimRank
\cite{jw02} for a given pair of nodes.

A highly related paper is \citet{bb10}, where
they investigate entries in functions of the adjacency matrix,
such as the exponential,
using quadrature-based bounds.
A priori upper and lower bounds
 are obtained by employing a few Lanczos steps and
the bounds are effectively used to observe
the exponential decay behavior of the exponential of
an adjacency matrix.

In \citet{sarkar-moore07},  an interesting and
efficient approach is proposed for finding approximate nearest neighbors with
respect to a truncated version of the commute time measure.
\citet{spiel08} develop a technique for computing
the effective resistance of all edges (which is proportional to
commute time) in $O(m \log n)$ time.  Both of these procedures
involve some preprocessing.

Standard techniques to approximate Katz scores include truncating
the series expansion to paths of length less than
$\ell_{\text{max}}$~\cite{foster+01,Wang2007-link-prediction} and low-rank
approximation~\cite{cikm03,Acar2009-link-prediction}.  Only the
former technique, when specialized to compute only a pair or
top-$k$ set, has performance comparable to our algorithms.
However, when we tested an adapted algorithm based
on the Neumann series expansion, it required much more work
than the techniques we propose.

As mentioned in the introduction, both commute times and Katz
scores  were studied by \citet{cikm03} for the task of link prediction,
and were found to be effective.
%Since then, work on the link prediction problem has tried to
%exploit metadata about the nodes of a graph, see
%\todo{a few sentences about metadata based link prediction,
%I think of Leskovec's papers at WSDM on supervised random
%walks here, also stuff from Lisa Getoor; or just remove this
%this }
Beyond link prediction, \citet{yen+07} use a commute time kernel based
approach to detect clusters and show that this method
outperforms other kernel based clustering algorithms. The authors
use commute time to define a distance measure between nodes, which
in turn is used for defining a so-called intra-cluster inertia.
Intuitively, this inertia measures how close nodes within a cluster
are to each other.
The algorithm we propose for computing the Katz and commute time score for a
given pair of nodes $x, y$ extends to the case where one wants to
find the aggregate score between a node $x$ and a set of nodes $S$.
Consequently, this work has applications for finding the distance between
a point and a cluster as well as for finding intra-cluster inertia.
For applications to recommender systems, \citet{sarkar08} used their truncated
commute time measure for link prediction over a collaboration
graph and showed that it outperforms personalized
PageRank~\cite{page1999-pagerank}.

%Collaborative filtering is a popular approach to
%recommender systems~\cite{recsys-survey05}. As mentioned
%in the introduction, Sarkar and Moore~\cite{sarkar-moore07}
%motivated commute time in the context of collaborative filtering and proposed an
%interesting and efficient approach for finding approximate nearest
%neighbors with respect to a truncated version of the commute time
%measure.  \citet{saerens} develop an
%application for spectral clustering based on principal component
%analysis of graphs, based on the Euclidean commute time distance
%between nodes, defined as the square root of the average commute
%time.
%They show that the PCA obtained using ECTD has interesting connections
%with spectral graph theory.

\section{The Lanczos Process} \label{sec:lanczos}

The Lanczos algorithm~\cite{lanczos1950-iteration}
is a procedure applied to a symmetric matrix, which works
particularly well when the given matrix is large and sparse.
A sequence of Lanczos iterations can be thought
of as ``truncated'' orthogonal similarity transformations.
Given an $n\times n$ matrix $\mZ$, 
we construct a matrix $\mQ$ with orthonormal columns, one at a time, 
and perform only a small number of steps, say $k$, where $k \ll n$.
The input for the algorithm is the  matrix $\mZ$, an initial
vector $\vq$ and a number of steps $k$. Upon exit, we have an $n
\times (k+1)$ matrix $\mQ_{k+1}$ with orthonormal columns and a $(k+1)
\times k$ tridiagonal matrix $\mT_{k+1,k}$, that satisfy the relationship
\[
\mZ \mQ_k = \mQ_{k+1} \mT_{k+1,k} ,
\]
where $\mQ_k$ is the $n \times k$
matrix that contains the first $k$ columns of $\mQ_{k+1}$, and $ \mT_{k+1,k} =
{\rm tri}(\beta_i,\alpha_i,\beta_i)$ is tridiagonal.

What makes the Lanczos procedure attractive are the good approximation
properties that it has for $k \ll n$. The matrix $\mT_{k+1,k}$ is small when
$k \ll n$, but the eigenvalues of its $k \times k$ upper
part -- a matrix we will refer to as $\mT_k$ in the subsequent section -- approximate the extremal eigenvalues of the large $n \times n$ matrix
$\mZ$.
This can be exploited not
only for eigenvalue computations but also for solving a linear system
\cite{lanczos1952-linear-system,Paige-1975-indefinite}.
Another attractive feature is that the matrix $\mZ$ does not
necessarily have to be provided explicitly;  the algorithm
only uses $\mZ$ via matrix-vector products.

The Lanczos procedure is given in Algorithm~\ref{alg:lanczos}. For expositional purposes we define the core of the algorithm as
Algorithm~\ref{alg:lanczos-step}. We will later incorporate that part into other algorithms -- see Section~\ref{sec:pairwise}.

\begin{figure}
\begin{minipage}{0.45\linewidth}
\begin{algorithm}[H]
\caption{Lanczos($\mZ,\vq,k)$.} \label{alg:lanczos}
\begin{algorithmic} [1]
\STATE $\vq_1 = \vq / \|\vq\|_2, \beta_0 =0, \vq_0 = 0 $
 \FOR{$j=1 \text{ to } k$}
\STATE$\vz= \mZ \vq_j$ \\
\STATE$\alpha_j=\vq_j^T \vz$ \\
\STATE$\vz=\vz-\alpha_j \vq_j - \beta_{j-1} \vq_{j-1}$ \\
\STATE$\beta_j=\|\vz\|_2$\\
\STATE \textbf{if} $\beta_j=0, \vq_{j+1} = 0$ \textbf{and quit} \\
\STATE \textbf{else} $\vq_{j+1}=\vz/\beta_j$ \\
\ENDFOR
\end{algorithmic}
\end{algorithm}
\end{minipage}
\hfil
\begin{minipage}{0.54\linewidth}
\begin{algorithm}[H]
\caption{LanczosStep($\mZ,\vq\itn{-},\vq,\beta\itn{-})$.} \label{alg:lanczos-step}
\begin{algorithmic} [1]
\STATE$\vz= \mZ \vq$ \\
\STATE$\alpha=\vq^T \vz$ \\
\STATE$\vz=\vz-\alpha \vq - \beta\itn{-} \vq\itn{-}$ \\
\STATE$\beta=\|\vz\|_2$\\
\STATE \textbf{if} $\beta=0, \vq\itn{+} = 0$  \\
\STATE \textbf{else} $\vq\itn{+}=\vz/\beta$ \\
\STATE \textbf{return} $(\vq,\alpha,\beta)$
\end{algorithmic}
\end{algorithm}
\end{minipage}
\end{figure}

\section{Pairwise Algorithms} \label{sec:pairwise}

Consider the commute time and Katz score between a single pair
of nodes:
\[
    C_{i,j} = \text{Vol}(G)(\ve_i - \ve_j)^T \mL^\dagger (\ve_i - \ve_j)
		\quad
		\text{ and }
		\quad
		K_{i,j} = \ve_i^T (\eye - \alpha \mA)^{-1} \ve_j - \delta_{i,j}.
\]
	In these expressions, $\ve_i$ and $\ve_j$ are vectors of zeros
with a $1$ in the $i$th and $j$th position, respectively; and
$\delta_{i,j}$ is the Kronecker delta function.
A straightforward means of computing them is to solve
the linear systems
\[ \mLhat^{-1} \vy = \ve_i - \ve_j
    \quad
		\text{ and }
		\quad
		(\mI - \alpha \mA) \vx = \ve_j.
\]
Then
$C_{i,j} = \text{Vol}(G)(\ve_i - \ve_j)^T \vy$
and $K_{i,j} = \ve_i^T \vx - \delta_{i,j}$.
It is possible to compute the pair-wise scores by solving these linear systems. In what follows, we show
how a technique combining the Lanczos iteration and a quadrature
rule~\cite{Golub93,Golub97}
produces the pair-wise commute time score or
the pair-wise Katz score
\emph{as well as} upper and lower bounds on the estimate.

\subsection{Matrices moments and quadrature}
\label{sec:mmq} \label{sec:mmq-bilinear}

Both of the pairwise computations above are instances of the general problem of estimating a bilinear
form:
\[ \vu^T f(\mZ) \vv , \]
where $\mZ$ is symmetric positive definite (for Katz, this
occurs by restricting the value of $\alpha$, and for commute times,
the adjusted Laplacian $\mLhat$ is always positive definite),
and $f(x)$ is an analytic function on the region containing
the eigenvalues of $\mZ$.  The only function $f(x)$ we use
in this paper is $f(x) = \frac{1}{x}$, although we treat the
problem more generally for part of this section.

\citet{Golub93, Golub97}
introduced elegant computational techniques for evaluating such bilinear
forms. They provided a solid mathematical framework and a rich
collection of possible applications.
These techniques are well known in the numerical linear algebra
community, but they do not seem to have been used in data mining
problems.
We adapt this methodology to the pairwise score problem, and
explain how to do so in an efficient manner in a large scale
setting. The algorithm has two main components, combined together:
Gauss-type quadrature rules for evaluating definite integrals, and
the Lanczos algorithm for partial reduction to symmetric tridiagonal
form.  In the following discussion, we treat the case of $\vu = \vv$.
This form suffices thanks to the identity
\[
\vu^T f(\mZ) \vv =
\frac{1}{4} \left[
(\vu + \vv)^T f(\mZ) (\vu + \vv) - (\vu - \vv)^T f(\mZ) (\vu - \vv)
\right].
\]

Because $\mZ$ is symmetric positive definite,
it has a unitary spectral decomposition,
$ \mZ = \mQ \mLambda \mQ^T, $ where $\mQ$ is an
orthogonal matrix whose columns are eigenvectors of $\mQ$ with unit
$2$-norms, and $\mLambda$ is a diagonal matrix with the eigenvalues of
$\mQ$ along its diagonal.
We use this decomposition only
for the derivation that follows,
it is never computed in our algorithm.
Given this decomposition, for
any analytic function $f$,
\[
 \vu^T f(\mZ) \vu = \vu^T \mQ f(\mLambda) \mQ^T \vu =
\sum_{i=1}^n  f(\lambda_i) \tilde{u}_i^2, \]
where
$\vec{\tilde{u}}=\mQ^T \vu$.
The last sum is equivalent to the Stieltjes integral
\begin{equation}
\label{eq:int}
\vu^T f(\mZ) \vu =
\int_{\lmin}^{\lmax} f(\lambda) \, d\omega(\lambda).
\end{equation} Here
$\omega(\lambda)$ is a piecewise
constant measure, which is monotonically increasing,
and its values depend directly on the eigenvalues of $\mZ$.
Both $\lmin$ and $\lmax$ are values that are 
lower and higher than the extremal eigenvalues of $\mZ$, respectively.
Let
\[
0 < \lambda_1 \le \lambda_2 \le \ldots \le \lambda_n
\]
be the eigenvalues of $\mZ$.  Note that $\lmin < \lambda_1$
and $\lmax > \lambda_n$.
Now, $\omega(\lambda)$ takes the following form:
\[
\omega(\lambda) = \begin{cases}
0 & \lambda < \lambda_1 \\
\sum_{j=1}^i \tilde{u}_j^2 & \lambda_i \le \lambda < \lambda_{i+1} \\
\sum_{j=1}^n \tilde{u}_j^2 & \lambda_n \le \lambda.
\end{cases} \]

The first of Golub and Meurant's key insights is that we
can compute an approximation for an integral of the form
\eqref{eq:int} using a quadrature rule, e.g.,
\[ \int_{\lmin}^{\lmax} f(\lambda) \, d\omega(\lambda)
\approx \sum_{j=1}^N f(\eta_j) \omega_j
\]
where $\eta_j, \omega_j$ are the nodes and weights
of a Gaussian quadrature rule.
The second insight is that the
Lanczos procedure \emph{constructs
the quadrature rule itself}.
Since we use a quadrature rule, an estimate of the error is readily
 available, see for example~\citet{davis1984-integration}.
More importantly, we can use variants of the Gaussian
quadrature to obtain
both lower and upper bounds and ``trap'' the value of the element of
the inverse that we seek between these bounds.

The ability to
estimate bounds for the value is powerful and provides
effective stopping criteria for the algorithm -- we shall see
this in the experiments in Section~\ref{sec:experiments-pairwise-commute}.
It is important to
note that such component-wise bounds cannot be easily obtained if we were to
extract the value of the element from a column of the inverse, by
solving the corresponding linear system, for example.
Indeed, typically for the solution of a linear system, norm-wise bounds
are available, but obtaining bounds pertaining to the components
of the solution is significantly more challenging and results of
this sort are harder to establish.
It should also be noted that bounds of the sort discussed here cannot
be obtained for general non-symmetric matrices.

Returning to the procedure, let $f(\lambda)$ be a function where the $(2k+1)$st derivative has a negative
sign for all $\lmin < \lambda < \lmax$.  Note that $f(\lambda) = 1/\lambda$ satisfies
this condition because all odd derivatives are negative when $\lambda > 0$.
As a high level algorithm, the Golub-Meurant procedure for
estimating bounds
\[ \underline{b} \le \vu^T f(\mZ) \vu \le \overline{b} \]
is given by the following steps.
\begin{compactenum}
\item Let $\sigma = \normof{\vu}$.
\item Compute $\mT_k$ from $k$ steps of the Lanczos procedure
applied to $\mZ$ and $\vu/\sigma$.
\item Compute $\mat{\underline{T}}_k$, which the matrix $\mT_k$ extended
with another row and column crafted to add the eigenvalue $\lmax$ to
the eigenvalues of $\mT_k$.  This new matrix is still tridiagonal.
\item Set $\underline{b} = \sigma^2 \ve_1^T f(\mat{\underline{T}}_k) \ve_1$.
This estimate corresponds to a $(k+1)$-point Gauss-Radau rule
with a prescribed point of $\lmax$.
\item Compute $\mat{\overline{T}}_k$, which the matrix $\mT_k$ extended
with another row and column crafted to add the eigenvalue $\lmin$ to
the eigenvalues of $\mT_k$.  Again, this new matrix is still tridiagonal.
\item Set $\overline{b} = \sigma^2 \ve_1^T f(\mat{\overline{T}}_k) \ve_1$.
This estimate corresponds to a $(k+1)$-point Gauss-Radau rule
with a prescribed point of $\lmin$.
\end{compactenum}
Based on the theory of Gauss-Radau quadrature, the fact that these
are lower and upper bounds on the quadratic form $\vu^T f(\mZ) \vu$
follows because the \emph{sign} of the error term changes when
prescribing a node in this fashion.  See \citet[Theorem~6.4]{Golub-2010-mmq}
for more information.
As $k$ increases, the upper and lower bounds converge.

While this form of the algorithm is convenient for understanding
the high level properties and structure of the procedure, it is
not computationally efficient.  If $f(\lambda) = 1/\lambda$ and
if we want to compute a more accurate estimate by increasing $k$, then
we need to solve two inverse eigenvalue problems (steps 3 and 5), and
solve two linear systems (steps 4 and 6).  Each of these steps involves
$O(k)$ work because the matrices involved are tridiagonal.  However,
a \emph{constant-time} update procedure is possible.  The set of
operations to efficiently update $\blower$ and $\bupper$ after
a Lanczos step (Algorithm~\ref{alg:lanczos-step}) is given
by Algorithm~\ref{alg:mmqstep}.  Please see
\citet{Golub97} for an explanation of this procedure.
Using Algorithms~\ref{alg:lanczos-step}~and~\ref{alg:mmqstep}
as sub-routines, it is now straightforward to state the
pairwise commute time and Katz procedures.

\begin{algorithm}[p]
\caption{MMQStep~\cite[Algorithm GQL]{Golub97}} \label{alg:mmqstep}
\begin{algorithmic} [1]
\REQUIRE $\alpha, \beta\lasti, \beta,
b\lasti, c\lasti, d\lasti,
\dlower\lasti, \dupper\lasti$

\STATE
  $b = b\lasti +
	  \frac{\beta\lasti^{2} c\lasti^{2}}
		     {d\lasti(\alpha d\lasti-\beta\lasti^{2})}$;
		\quad
		$c = c\lasti \frac{\beta\lasti}{d\lasti}$;
		\quad 
    $d = \alpha - \frac{\beta\lasti^{2}}{d\lasti}$

 \STATE
   $\dupper = \alpha - \lmin - \frac{\beta\lasti^{2}}{\dupper\lasti}$; \quad
   $\dlower = \alpha - \lmax - \frac{\beta\lasti^{2}}{\dlower\lasti}$

 \STATE
   $\omegaupper = \lmin + \frac{\beta^{2}}{\dupper}$; \quad
   $\omegalower = \lmax + \frac{\beta^{2}}{\dlower}$
	
 \STATE
   $\bupper = b + \frac{\beta^2 c^2}{d(\omegaupper d - \beta^2)}$; \quad
   $\blower = b + \frac{\beta^2 c^2}{d(\omegalower d - \beta^2)}$
	
 \ENSURE $(\bupper, \blower)$ and
 ($b, c, d, \dupper, \dlower$)
\end{algorithmic}
\end{algorithm}

\subsection{Pairwise commute scores}
\label{sec:pairwise-commute}

The bilinear form that we need to estimate
a commute time is
\[
b = (\ve_i - \ve_j)^T \mLhat^{-1} (\ve_i -\ve_j).
\]
For this problem, we apply
Algorithm~\ref{alg:lanczos-step} to step
through the Lanczos process and then use
Algorithm~\ref{alg:mmqstep}
to update the upper and lower bounds on the score.
This combination is explicitly described in
Algorithm~\ref{alg:pairwise-commute}.
Note that we do not need to apply the final correction
with $\frac{1}{n} \ve \ve^T$ because  $\ve^T (\ve_i - \ve_j) = 0$.

\begin{algorithm}[p]
\caption{Pairwise Score Bounds for commute time}
\label{alg:commute-pairwise} \label{alg:pairwise-commute}
\label{alg:mmq-commute}
\begin{algorithmic} [1]
\REQUIRE $\mL$ (Laplacian matrix); $i, j$ (pairwise coordinate);
$\lmin, \lmax$ (bounds where $\lmin < \lambda(\mL) < \lmax$);
$\tau$ (stopping tolerance)
\ENSURE $\kappamin, \kappamax$ where
$\kappamin < (\ve_i - \ve_j)^T \mL^{\dagger} (\ve_i - \ve_j) < \kappamax$
\STATE (Initialize Lanczos)
  $\sigma = \sqrt{2}, \vq_{-1} = 0, \vq_0 = (\ve_i - \ve_j)/\sigma, \beta_0 = 0$
\STATE (Initialize MMQStep)
	$b_0=0, c_0=1, d_0=1, \dupper_0=1, \dlower_0=1$
\FOR{$j=1,...$}
 \STATE
    Set $(\vq_j,\alpha_j,\beta_j)$ from
    LanczosStep$(\mLhat,\vq_{j-2},\vq_{j-1},\beta_{j-1})$
		
 \STATE
    Set $(\bupper, \blower)$ and
		$(b_j, c_j, d_j, \dupper_j, \dlower_j)$
		from \\MMQStep$(\alpha_j, \beta_{j-1}, \beta_j,
			b_{j-1}, c_{j-1}, d_{j-1}, \dlower_{j-1}, \dupper_{j-1})$.
			
	\STATE $\kappamin = \sigma^2 \blower$; $\kappamax = \sigma^2 \bupper$
	\STATE \textbf{if} $\kappamax - \kappamin < \tau$, \textbf{stop}
	
 \ENDFOR
\end{algorithmic}
\end{algorithm}

\subsection{Pairwise Katz scores}
\label{sec:pairwise-katz}

The bilinear form that we need to estimate
for a Katz score is
\[ b = \ve_i^T (\eye - \alpha \mA)^{-1} \ve_j. \]
Recall that we use the identity:
\[ b = \frac{1}{4}
\bigl[ \underbrace{(\ve_i + \ve_j)^T (\eye - \alpha \mA)^{-1} (\ve_i + \ve_j)}_{=g}
- \underbrace{(\ve_i - \ve_j)^T (\eye - \alpha \mA)^{-1} (\ve_i - \ve_j)}_{=h} \bigr] \]
In this case, we apply the combination of LanczosStep and MMQStep
to estimate $\glower \le g \le \gupper$ and $\hlower \le h \le \hupper$.
Then $\frac{1}{4}(\glower - \hupper) \le b \le \frac{1}{4} (\gupper - \hlower)$.

\begin{algorithm}[p]
\caption{Pairwise Score Bounds for Katz}
\label{alg:katz-pairwise} \label{alg:pairwise-katz}
\begin{algorithmic} [1]
\REQUIRE $\mA$ (adjacency matrix); $\alpha$ (the Katz damping factor);
$i, j$ (pairwise coordinate); $\lmin, \lmax$ (bounds where $\lmin < \lambda(\eye - \alpha \mA) < \lmax$);
$\tau$ (stopping tolerance)
\ENSURE $\rhomin, \rhomax$ where $\rhomin < (\eye - \alpha \mA)^{-1}_{i,j} < \rhomax$

\STATE (Initialize Lanczos for $g$)
  $\sigma = \sqrt{2}, \vq_{-1} =0, \vq_0 = (\ve_i + \ve_j)/\sigma, \beta_0^g = 0$
\STATE (Initialize Lanczos for $h$)
  $\vu_{-1} = 0, \vu_0 = (\ve_i - \ve_j)/\sigma, \beta_0^h = 0$
\STATE (Initialize MMQStep for $g$)
	$b_0^g=0, c_0^g=1, d_0^g=1, \dupper_0^g=1, \dlower_0^g=1$
\STATE (Initialize MMQStep for $h$)
  $b_0^h=0, c_0^h=1, d_0^h=1, \dupper_0^h=1, \dlower_0^h=1$
\FOR{$j=1,...$}
 \STATE
    Set $(\vq_j,\alpha_j^g,\beta_j^g)$ from
    LanczosStep$((\eye - \alpha \mA),\vq_{j-2},\vq_{j-1},\beta_{j-1}^f)$
 \STATE
    Set $(\vu_j,\alpha_j^h,\beta_j^h)$ from
    LanczosStep$((\eye - \alpha \mA),\vu_{j-2},\vu_{j-1},\beta_{j-1}^h)$		
 \STATE
    Set $(\gupper, \glower)$ and
		$(b_j^g, c_j^g, d_j^g, \dupper_j^g, \dlower_j^g)$
		from \\MMQStep$(\alpha_j^g, \beta_{j-1}^g, \beta_j^g,
			b_{j-1}^g, c_{j-1}^g, d_{j-1}^g, \dlower_{j-1}^g, \dupper_{j-1}^g)$.
	
	\STATE
    Set $(\hupper, \hlower)$ and
		$(b_j^h, c_j^h, d_j^h, \dupper_j^h, \dlower_j^h)$
		from \\ MMQStep$(\alpha_j^h, \beta_{j-1}^h, \beta_j^h,
			b_{j-1}^h, c_{j-1}^h, d_{j-1}^h, \dlower_{j-1}^h, \dupper_{j-1}^h)$.
			
	\STATE $\rhomin = \sigma^2/4 (\glower - \hupper)$;
	  $\rhomax = \sigma^2/4 (\gupper - \hlower)$
	\STATE \textbf{if} $\rhomax - \rhomin < \tau$, \textbf{stop}
	
 \ENDFOR

\end{algorithmic}
\end{algorithm}

\section{Column-wise algorithms} \label{sec:rowwise} \label{sec:columnwise}

Whereas the last section used a single procedure to derive two algorithms, in this
section, we investigate two different procedures: one for commute time and a different
procedure for Katz scores. The reason behind this difference is that, as mentioned
in the introduction, computing a column
of the commute time matrix cannot be stated as the solution of a single linear
system:
\[ \vc_i = \mC \ve_i = \vol(G) [(\ve_i - \ve_v)^T \mLhat^{-1} (\ve_i - \ve_v) : 1 \le v \le n]. \]
Computing this column requires \emph{all} of the diagonal elements of the inverse.
In contrast, a column of the Katz matrix is just the solution of a linear
system:
\[
\vk_i = \mK \ve_i = (\eye - \alpha \mA)^{-1} \ve_i - \ve_i.
\]
For this computation, we exploit an empirical localization
property of these columns.

\subsection{Column-wise commute times}
\label{sec:columnwise-commute}

A straightforward way to compute an entire column of the commute time matrix would
require solving $n$ separate linear systems: one to get both $\mLhat^{-1} \ve_i$
and $\mLhat^{-1}_{i,i}$, and the other $n-1$ to get $\mLhat^{-1}_{j,j}$
for $i \not= j$.  Neither solving each system independently, nor using
a \emph{multiple right hand side} algorithm
\cite{OLeary-1980-block-cg}, will easily
yield an efficient procedure.  Both of these approaches generate far
too much extraneous information.  In fact, we only need one linear system
solve, and the diagonal elements of the pseudo-inverse.  Thus, any procedure to compute
or estimate $\diag(\mL^{\dagger})$ provides a tractable algorithm.

One such procedure arises, again, from the Lanczos method.
It was originally described by
\citet{Paige-1975-indefinite}, and is explained in more detail in
\citet{Chantas-2008-cgdiag}.
Suppose we want to compute $\diag(\mLhat^{-1})$.
If the Lanczos algorithm runs to
completion in exact arithmetic, then we have:
\[ \mLhat = \mQ \mT \mQ^T \qquad \text{ and } \qquad
   \mLhat^{-1} = \mQ \mT^{-1} \mQ^T.  \]
Let $\mT = \mR \mR^T$ be a Cholesky factorization of $\mT$.  If we substitute
this factorization into the expression for the inverse, then
$\mLhat^{-1} = \mV \mR^{-T} \mR^{-1} \mV^T$.  Now, let $\mW = \mV \mR^{-T}$.
Note that $\mLhat^{-1} = \mW \mW^T$.  As a notational convenience, let $\vw_k$ be the
$k$th column of $\mW$. Consequently,
\[ \diag(\mE^{-1}) = \sum_{k=1}^n \vw_k \circ \vw_k  \]
where $\vw_k \circ \vw_k$ is the Hadamard (element-wise) product: 
$[\vw_k \circ \vw_k]_i = \vw_{k,i}^2$.  If we implement CG based
on the Lanczos algorithm as explained in \citet{Paige-1975-indefinite},
then the vector $\vw_k$ is computed as part of the standard algorithm,
and is available at no additional cost. This idea is implemented
in the \texttt{cgLanczos.m} code~\cite{Saunders-2007-cgLanczos},
which we use in our experiments. Please see
\citet{Chantas-2008-cgdiag} for a detailed account of this derivation
including the diagonal estimate.

Based on advice from the author of the \texttt{cgLanczos} code,
we added local reorthogonalization
to the Lanczos procedure.  This addition requires a few extra vectors of
memory, but ensures greater orthogonality in the computed Lanczos vectors
$\vq_k$.  Also, based on advice from the author, we use the following
preconditioned linear system:
\[ \mD^{-1/2} \mLhat \mD^{-1/2} \vy = \mD^{-1/2} \ve_i. \]
If $\vf$ is the estimate of the diagonals of $(\mD^{-1/2} \mLhat \mD^{-1/2})^{-1}$,
then $\mD^{-1} \vf$ is the estimate of the diagonals of $\mLhat^{-1}$.
Using this preconditioned formulation, the algorithm converged much more quickly
than without preconditioning.  In summary, this
approach to estimate the column-wise commute times $\vc_i$ is:
\begin{compactenum}
\item Solve $\mD^{-1/2} \mLhat \mD^{-1/2} \vy = \mD^{-1/2} \ve_i$
      using \texttt{cgLanczos.m} \\ to get both $\vy$ and
			$\vf \approx \diag\left( (\mD^{-1/2} \mLhat \mD^{-1/2})^{-1} \right)$.
\item Set $\vx = \mD^{-1/2} \vy - \frac{1}{n}\ve \approx \mL^{\dagger} \ve_i$.			
\item Set $\vg = \mD^{-1} \vf - \frac{1}{n}\ve \approx \diag( \mL^{\dagger} )$.
\item Output $\vc_i \approx \vg + x_i\ve - 2 \vx$.
\end{compactenum}
We refrain from stating this as a formal algorithm because the majority
of the work is in the $\texttt{cgLanczos.m}$ routine.

\subsection{Column-wise Katz scores}
\label{sec:columnwise-katz}

In this section, we show how to adapt techniques for rapid
personalized PageRank computation \cite{mcsherry2005-uniform,
andersen2006-local,berkhin2007-bookmark}
to the problem of computing a column of the Katz matrix.
Recall that such a column is given by the solution of a single
linear system:
\[ \vk_i = \mK \ve_i = (\eye - \alpha \mA)^{-1} \ve_i - \ve_i. \]
The algorithms for personalized PageRank
exploit the graph structure by accessing the edges
of individual vertices, instead of accessing the graph via
a matrix-vector product.  They
are ``local'' because they only access the adjacency information
of a small set of vertices and need not explore the majority
of the graph.  Such a property is useful when the solution of
a linear system is localized on a small set of elements.

Localization is a term with a number of interpretations.  Here, we use
it to mean that the vector becomes sparse after rounding small
elements to 0.  A nice way of measuring this property is to look
at the participation ratios~\cite{Farkas-2001-spectra}.
Let $\vk$ be a column of the
Katz matrix, then the participation ratio of $\vk$ is
\[ p = \frac{(\sum_j k_j^2)^2}{\sum_j k_j^4}. \]
This ratio measures the number of effective non-zeros of the
vector.
If $\vk$ is a uniform vector, then $p = n$, the size of the vector.
If $\vk$ has only a single element, then $p = 1$, the number of
states occupied.  For a series of graphs we describe more formally
in Section~\ref{sec:data}, we show the statistics of
some participation ratios in Table~\ref{tab:katz-localization}.
We pick columns of
the matrix in two ways: (i) randomly and (ii) from the degree distribution to
ensure we choose both high, medium, and low degree vertices.
See Section~\ref{sec:runtime} for a more formal
description about how we pick columns; we use the ``hard alpha''
value of Katz described in the experiments section.
The results show that number of effective non-zeros is always
less than 10,000, even when the graph has 1,000,000 vertices.
Usually, it is even smaller.  Our forthcoming algorithms
exploit this property.

\begin{table}
\caption{Participation ratios for Katz scores.  These results
demonstrate that the columns of the Katz matrix are highly localized.
In the worst case, there are only a few thousand large elements
in a vector, compared with the graph size of a few hundred thousand vertices. }
\label{tab:katz-localization}
\small
\begin{tabularx}{\linewidth}{lXXXXXX}
\toprule
Graph & Vertices & Avg.~Deg. & \multicolumn{4}{l}{Participation Ratios} \\
\cmidrule{4-7}
& & & Min & Mean & Median & Max \\
\midrule
               tapir &    1024 &   5.6 & 4.2 &   12.0 &   11.8 &   35.8 \\
     stanford-cs-sym &    2759 &   7.4 & 1.0 &   26.3 &   23.5 &  274.1 \\
             ca-GrQc &    4158 &   6.5 & 1.0 &   27.4 &   34.0 &   84.2 \\
           wiki-Vote &    7066 &  28.5 & 1.2 &  248.8 &  291.6 &  342.6 \\
            ca-HepTh &    8638 &   5.7 & 1.0 &   23.5 &   29.8 &   82.1 \\
            ca-HepPh &   11204 &  21.0 & 1.0 &  160.7 &  256.1 &  268.5 \\
           Stanford3 &   11586 &  98.1 & 1.1 & 1509.5 & 1657.8 & 1706.4 \\
          ca-AstroPh &   17903 &  22.0 & 1.0 &  167.5 &  219.2 &  290.8 \\
          ca-CondMat &   21363 &   8.5 & 1.0 &   71.0 &   85.6 &  204.6 \\
         email-Enron &   33696 &  10.7 & 1.0 &  203.0 &  262.5 &  598.6 \\
       soc-Epinions1 &   75877 &  10.7 & 1.0 &  299.2 &  455.6 &  526.0 \\
    soc-Slashdot0811 &   77360 &  12.1 & 1.0 &  320.4 &  453.3 &  495.8 \\
               arxiv &   86376 &  12.0 & 1.0 &  121.1 &  137.9 &  508.6 \\
                dblp &   93156 &   3.8 & 1.0 &   50.0 &   25.2 &  258.9 \\
         email-EuAll &  224832 &   3.0 & 1.0 &  237.7 &  276.7 & 7743.7 \\
             flickr2 &  513969 &  12.4 & 1.0 &  592.3 & 1104.9 & 1414.9 \\
      hollywood-2009 & 1069126 & 105.3 & 2.0 & 1696.0 & 2433.8 & 3796.0 \\
\bottomrule
\end{tabularx}
\end{table}			

The basis of these personalized PageRank
algorithms is a variant on the Richardson
stationary method for solving a linear system~\cite{v62}.
Given a linear system $\mZ \vx = \vb$, the Richardson iteration is
\[  \vx\itn{k+1} = \vx\itn{k} + \ \vr\itn{k}, \]
where $\vr\itn{k} = \vb - \mZ \vx\itn{k}$
is the residual vector at the $k$th iteration. 
While updating $\vx\itn{k+1}$ is a linear time operation,  computing
the next residual requires another matrix-vector product.
To take advantage of the graph structure,
the personalized PageRank algorithms \cite{mcsherry2005-uniform,andersen2006-local,berkhin2007-bookmark}
propose the following change:
 do not update $\vx\itn{k+1}$ with the entire residual, and instead change
 only a single component of $\vx$.
Formally,
 $\vx\itn{k+1} = \vx\itn{k} + r\itn{k}_j \ve_j$,
 where $\ve_j$ is a vector of all zeros, except for a single $1$ in the $j$th position,
and $r\itn{k}_j$ is the $j$th component of the residual vector.
 Now, computing the next residual involves accessing a single column of the matrix
 $\mZ$:
 \[ \vr\itn{k+1} = \vb - \mZ \vx\itn{k+1} = \vb - \mZ(\vx\itn{k} +  r\itn{k}_j \ve_j)
    = \vr\itn{k} + r\itn{k}_j \mZ \ve_j.
 \]
 Suppose that $\vr,$ $\vx,$ and $\mZ \ve_j$ are sparse, then this update introduces
 only a small number of new nonzeros into both $\vx$ and the new residual $\vr$.
 If $\mZ = (\eye - \alpha \mA)$, as in the case of Katz, then each
 column is sparse, and thus
 keeping the solution and residual sparse is a natural choice for graph algorithms
 where the solution $\vx$ is localized (i.e., many components of $\vx$ can be rounded to 0
 without dramatically changing the solution).
 By choosing the element $j$ based on the largest entry in the sparse residual
 vector (maintained in a heap),
 this algorithm often finds a good approximation to the largest entries
 of the solution vector $\vx$ while exploring only a small subset of the graph.
 The resulting procedure is presented in Algorithm~\ref{alg:katz-columnwise}.
 For reasons that will become clear below, we call this procedure the
 Gauss-Southwell algorithm.  When experimenting with this
 method, we found that picking
 elements from the heap proportional to $\mD^{-1} \vr$ instead of $\vr$
 yielded convergence with fewer total edges explored,
 mirroring
 the results in \citet{andersen2006-local}.  We use this version in
 all of our experiments, although we state all the formal convergence
 results for the simple choice of residual $\vr$.

\begin{algorithm}
\caption{Column-wise Katz scores (via the Gauss-Southwell algorithm)} \label{alg:katz-columnwise}
\begin{algorithmic}[1]
\REQUIRE $\mA$ (the adjacency matrix), $\alpha$ (the Katz damping factor),
$i$ (the desired column), $\tau$ (a stopping tolerance).
\ENSURE $\vx$ (an approximate solution of $(\eye - \alpha \mA)^{-1} \ve_i )$
\STATE Set $\vx = 0$, $\vr = 0$
\STATE Let $\set{H}$ be a heap over the non-zero entries of $\vr$ larger than $\tau$.
\STATE Set $r_i = 1$, update $\set{H}$
\WHILE{$\set{H}$ is not empty}
  \STATE Set $j$ as the index of the largest element in $\set{H}$
	\STATE \textbf{if} $r_j < \tau$ \textbf{then} quit.
	\STATE $\eta = r_j$
	\STATE $x_j \leftarrow x_j + \eta$
	\STATE $r_j \leftarrow 0$, remove $j$ from $\set{H}$
	\FOR{$u$ where $A_{j,u} > 0$}
	  \STATE $r_u \leftarrow r_u + \alpha \eta$
		\STATE \textbf{if} $r_u > \tau$ \textbf{then} insert $j$ in $\set{H}$ or update $\set{H}$.
	\ENDFOR
\ENDWHILE	
\STATE $x_i \leftarrow x_i - 1$
\end{algorithmic}
\end{algorithm}

 Let $\dmax$ be the maximum degree of a node in the graph, then each
 iteration takes $O(\dmax \log n)$ time.
We analyze the convergence of this algorithm for Katz scores in two
stages.  In the first case, when $\alpha < 1/\dmax$, then the convergence
theory of this method for personalized PageRank \emph{also} shows
that it converges for Katz scores.  This fortunate occurrence results
from the equivalence of Katz scores and the general formulation of
PageRank adopted by \citet{mcsherry2005-uniform} in this setting.
In the second case, when $\alpha < 1/\sigma_{\max}(\mA)$, then
$(\eye - \alpha \mA)$ is still symmetric positive definite, and
the Richardson algorithm converges.  To show convergence in this case,
we will utilize an equivalence between this algorithm and a coordinate
descent method.

For completeness, we show a precise convergence result when $\alpha < 1/\dmax$.
The key observation here is that the residual $\vr$ is always non-negative and
that the sum of the residual ($\ve^T \vr$) is monotonically decreasing.
To show convergence, we need to bound this sum by a function that converges
to 0.

Consider the
algorithm applied to $(\meye - \alpha \mA) \vx = \ve_i$.
From step $k$ to step $k+1$,
the algorithm sets
\[ \vx\itn{k+1} = \vx\itn{k} + \eta \ve_j; \qquad
   \vr\itn{k+1} = \vr\itn{k} + \eta (\eye - \alpha \mA) \ve_j. \]
First note that $\alpha < 1/\dmax$ implies $r\itn{k+1}_i \ge 0$ given $r\itn{k}_
i \ge 0$.  This bound now implies that
$\vx\itn{k+1}_i \ge 0$ when $\vx\itn{k}_i \ge 0$.
Since these conditions hold for the initial conditions, $\vx\itn{0} = 0$ and $\vr\itn
{0} = \ve_q$, they remain true throughout the iteration.  Consequently, we can use the
\emph{sum} of $\vr\itn{k}$ as the 1-norm of this vector, that is, $\ve^T \vr\itn{k+1} =
\normof[1]{\vr\itn{k+1}}$.
It is now straightforward to analyze the convergence of
this sum:
\[ \ve^T \vr\itn{k+1} = \ve^T \vr\itn{k} - \eta
                         + \alpha \eta \ve^T \mA \ve_i. \]
At this point, we need the bound that $\eta = r\itn{k}_j \ge (1/n) \ve^T \vr\itn{k}$, which
follows immediately from the fact that $r\itn{k}_j$ is the
largest element in $\vr\itn{k}$.
Also, $\ve^T \mA \ve_i \le d_{\max}$.  Thus, we conclude:
\begin{remark} \label{rem:katz-simple}
If $\alpha < 1/\dmax$, then the 1-norm of the residual in the Gauss-Southwell
iteration applied to the Katz linear system satisfies
\[
\normof[1]{\vr\itn{k+1}}
  \le \biggr(1 - \frac{ 1- \alpha \dmax}{n}\biggl) \normof[1]{\vr\itn{k}}
	\le \biggr(1 - \frac{ 1- \alpha \dmax}{n}\biggl)^{k} .
\]
\end{remark}

In the second case, when $1/\dmax < \alpha < 1/\sigma_{\max}(\mA)$,
then the Gauss-Southwell iteration in Algorithm~\ref{alg:katz-columnwise}
still converges, however, the result is more intricate than
the previous case because the sum of the residual does not converge
monotonically. As we shall see, treating this linear system as an optimization
problem provides a way to handle this case.  Let $\mZ$ be symmetric
positive definite.  We first show that the Gauss-Southwell algorithm
is a coordinate descent method for the convex problem
\[ \MIN{}{\frac{1}{2} \vx^T \mZ \vx - \vx^T \vb = f(\vx).} \]
The gradient of this problem is $\mZ \vx - \vb$, hence
a stationary point is the solution of the linear system,
and the global minimizer.  In this framework, the
Richardson method is a gradient descent method.
Let $\vg\itn{k}$ be the gradient at step $k$,
$\vg\itn{k} = \mZ \vx\itn{k} - \vb$ then
\[ \vx\itn{k+1} = \vx\itn{k} - \vg \]
is exactly the Richardson step.

Now consider a standard
coordinate descent method.  Such methods usually minimize
the function in the $j$th coordinate \emph{exactly}.
Formally, they find
\[ \vx\itn{k+1} = \vx\itn{k} + \gamma\itn{k} \ve_j \]
where
\[ \gamma\itn{k} = \argmin_\gamma f(\vx\itn{k} + \gamma \ve_j). \]
Solving this system produces the choice
\[ \gamma\itn{k} = \frac{b_j - (\mZ \ve_j)^T \vx\itn{k} }{Z_{j,j}}. \]
Note that in terms of the optimization problem the Gauss-Southwell
algorithm generates
\[ \gamma\itn{k}_S = r_j\itn{k} = (b_j - \vz_j^T \vx\itn{k}). \]
The two methods are equivalent if the diagonals of $\mA$ are 1.
Consequently, we have:
\begin{lemma}
The Gauss-Southwell method for $\mZ \vx = \vb$
with $Z_{i,i} = 1$
is equivalent to a coordinate gradient descent method
for the function $f(\vx) = (1/2) \vx^T \mZ \vx - \vx^T \vb$.
\end{lemma}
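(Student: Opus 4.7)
The plan is to verify directly that the per-coordinate update produced by Gauss--Southwell (as extracted from Algorithm~\ref{alg:katz-columnwise}) coincides with the update produced by exact coordinate minimization of $f$, under the normalization $Z_{j,j}=1$. First I would expand $f$ along the $j$th coordinate direction: writing $\vx^{(k+1)} = \vx^{(k)} + \gamma \ve_j$, a direct calculation gives
\[
f(\vx^{(k)} + \gamma \ve_j) = f(\vx^{(k)}) + \gamma\, \ve_j^T (\mZ \vx^{(k)} - \vb) + \tfrac{1}{2}\gamma^2 Z_{j,j}.
\]
Differentiating in $\gamma$ and setting the derivative to zero recovers the formula already displayed in the text, namely $\gamma^{(k)} = (b_j - (\mZ \ve_j)^T \vx^{(k)})/Z_{j,j}$.

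Next I would read the Gauss--Southwell step out of Algorithm~\ref{alg:katz-columnwise}. Lines 7--8 set $x_j \leftarrow x_j + \eta$ with $\eta = r_j^{(k)}$, and the definition of the residual $\vr^{(k)} = \vb - \mZ \vx^{(k)}$ gives $r_j^{(k)} = b_j - (\mZ \ve_j)^T \vx^{(k)}$. Comparing, the Gauss--Southwell increment equals the exact coordinate minimizer precisely when $Z_{j,j} = 1$, which is the hypothesis. To round off the equivalence I would note that both procedures, at iteration $k$, are updating the same coordinate $j$ (the coordinate descent method here is the Gauss--Southwell variant, in which the next coordinate is chosen as the index of the largest entry of the gradient $\vb - \mZ\vx^{(k)} = \vr^{(k)}$), so equal coordinate choice together with equal step sizes yields identical iterates.

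There is essentially no obstacle here: the content of the lemma is a one-line calculation once the quadratic is expanded along a coordinate. The only thing to be careful about is bookkeeping---keeping the signs right in identifying $\vr^{(k)}$ with the negative gradient of $f$, and making sure the residual-update rule in the algorithm (line~11, $r_u \leftarrow r_u + \alpha \eta$ for $u$ adjacent to $j$, together with $r_j \leftarrow 0$) is consistent with $\vr^{(k+1)} = \vr^{(k)} - \eta \mZ \ve_j$ for $\mZ = \eye - \alpha \mA$. This consistency is what justifies calling the algorithm a coordinate descent step at all, and once it is verified, the lemma follows immediately.
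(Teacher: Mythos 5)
Your proposal is correct and follows essentially the same route as the paper: the paper's own argument is the derivation immediately preceding the lemma, comparing the exact coordinate minimizer $\gamma^{(k)} = (b_j - (\mZ\ve_j)^T\vx^{(k)})/Z_{j,j}$ with the Gauss--Southwell increment $r_j^{(k)} = b_j - (\mZ\ve_j)^T\vx^{(k)}$ and observing they coincide when the diagonal is $1$. Your extra check that the algorithm's residual update matches $\vr^{(k+1)} = \vr^{(k)} - \eta\,\mZ\ve_j$ is sound bookkeeping (and in fact fixes a sign typo in the paper's displayed residual recurrence), but adds nothing beyond the paper's argument.
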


To produce a convergent algorithm, we must now specify
how to choose the descent direction $j$.
\begin{theorem} \label{thm:katz-coordinate}
Let $\mZ$ be symmetric positive definite with $Z_{i,i} = 1$.
Then the Gauss-Southwell method for $\mZ \vx = \vb$ and
$j\itn{k} = \argmax_i \absof{r_i\itn{k}}$ or with $j\itn{k}$
chosen cyclically ($j\itn{1} = 1, j\itn{k+1} = j\itn{k}+1 \mod n$)
is convergent.
\end{theorem}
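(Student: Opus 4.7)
The plan is to leverage the lemma immediately preceding the theorem, which identifies the Gauss-Southwell iteration with exact coordinate descent on the strongly convex quadratic $f(\vx) = \tfrac{1}{2}\vx^T \mZ \vx - \vx^T \vb$. Since $\mZ$ is symmetric positive definite, $f$ has a unique minimizer $\vx^*$ satisfying $\mZ \vx^* = \vb$. The first calculation I would do is the per-step function decrease: because $Z_{j,j} = 1$, the exact line-search step is $\gamma = r_j^{(k)}$, giving $f(\vx^{(k)}) - f(\vx^{(k+1)}) = \tfrac{1}{2}\bigl(r_j^{(k)}\bigr)^2$. This identity reduces convergence to lower-bounding $\bigl(r_j^{(k)}\bigr)^2$ by a multiple of $f(\vx^{(k)}) - f(\vx^*)$.

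For the greedy rule $j^{(k)} = \argmax_i |r_i^{(k)}|$, the standard pigeonhole argument gives $\bigl(r_{j^{(k)}}^{(k)}\bigr)^2 \geq \tfrac{1}{n}\|\vr^{(k)}\|_2^2$. I would then relate the residual to the optimality gap using the identity $\vr^{(k)} = \mZ(\vx^* - \vx^{(k)})$, which yields
\[
\|\vr^{(k)}\|_2^2 = (\vx^* - \vx^{(k)})^T \mZ^2 (\vx^* - \vx^{(k)}) \;\geq\; \lambda_{\min}(\mZ) \cdot 2\bigl(f(\vx^{(k)}) - f(\vx^*)\bigr).
\]
Combining these three inequalities produces a linear contraction
\[
f(\vx^{(k+1)}) - f(\vx^*) \;\leq\; \Bigl(1 - \tfrac{\lambda_{\min}(\mZ)}{n}\Bigr)\bigl(f(\vx^{(k)}) - f(\vx^*)\bigr),
\]
which implies $\vx^{(k)} \to \vx^*$ and hence $\vr^{(k)} \to 0$. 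I would then remark that since $f - f(\vx^*) = \tfrac{1}{2}(\vx - \vx^*)^T \mZ(\vx - \vx^*)$ dominates $\tfrac{\lambda_{\min}(\mZ)}{2}\|\vx - \vx^*\|_2^2$, the contraction in $f$ transfers to $\vx$.

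For the cyclic rule, the plan is to recognize that one full sweep of exact coordinate descent through $j = 1, 2, \ldots, n$ on $f$ is precisely one Gauss-Seidel sweep applied to the linear system $\mZ \vx = \vb$; the update $\gamma^{(k)} = (b_j - (\mZ \ve_j)^T \vx^{(k)})/Z_{j,j}$ derived in the proof of the preceding lemma matches the Gauss-Seidel formula exactly. I would then invoke the classical Ostrowski--Reich theorem, which guarantees that Gauss-Seidel converges for every symmetric positive definite matrix; equivalently, its iteration matrix $(\mat{D}+\mat{L})^{-1}\mat{L}^T$ has spectral radius strictly less than $1$, where $\mZ = \mat{L} + \mat{D} + \mat{L}^T$ is the standard splitting. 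This yields a geometric contraction in $\vx^{(k)} - \vx^*$ after each complete cycle and hence convergence.

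The main obstacle is the cyclic case: unlike the greedy rule, the per-step decrease of $f$ cannot be bounded below by a fraction of the optimality gap (a single coordinate's residual may be zero even when the overall residual is large), so a one-step argument fails. The remedy is to package an entire cycle into the Gauss-Seidel contraction, which bypasses the need for a uniform per-step lower bound and sidesteps the potentially nonmonotone behavior of $\|\vr^{(k)}\|_2$ within a sweep. The greedy case, by contrast, reduces to a routine strong-convexity argument once the decrease identity is in hand.
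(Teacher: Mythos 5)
Your argument is correct, but it is a genuinely different route from the paper's. The paper proves the theorem in one line by identifying the iteration with exact coordinate descent (via the preceding lemma, using $Z_{j,j}=1$) and then citing Theorem~2.1 of Luo and Tseng for convergence of coordinate descent under both the Gauss--Southwell and the cyclic (essentially almost-cyclic) update rules; no rate and no self-contained analysis is given, which is why the paper later describes this result as only ``asymptotic convergence.'' You instead supply the machinery yourself: for the greedy rule, the per-step decrease $f(\vx^{(k)})-f(\vx^{(k+1)})=\tfrac{1}{2}\bigl(r_{j}^{(k)}\bigr)^2$, the pigeonhole bound $\bigl(r_{j}^{(k)}\bigr)^2\ge \tfrac{1}{n}\|\vr^{(k)}\|_2^2$, and the inequality $\|\vr^{(k)}\|_2^2=(\vx^*-\vx^{(k)})^T\mZ^2(\vx^*-\vx^{(k)})\ge 2\lambda_{\min}(\mZ)\bigl(f(\vx^{(k)})-f(\vx^*)\bigr)$ are all valid (the last because $\mZ^2\succeq\lambda_{\min}(\mZ)\,\mZ$ for symmetric positive definite $\mZ$), and they yield the explicit linear rate $1-\lambda_{\min}(\mZ)/n$, which is strictly more information than the paper obtains. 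For the cyclic rule, your observation that a full sweep of exact coordinate descent is exactly a Gauss--Seidel sweep, combined with the Ostrowski--Reich theorem for symmetric positive definite matrices, is a classical and perfectly adequate substitute for the Luo--Tseng citation; the only nitpick is that the Gauss--Seidel iteration matrix is $-(\mat{D}+\mat{L})^{-1}\mat{L}^T$ rather than $(\mat{D}+\mat{L})^{-1}\mat{L}^T$, a sign that does not affect the spectral radius. In short: the paper buys brevity and generality by outsourcing to a known coordinate-descent theorem; your proof buys self-containedness and a quantitative rate for the greedy rule at the cost of splitting the argument into two separate mechanisms.
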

\begin{proof}
This result follows from the convergence of the coordinate
descent method~\cite[Theorem 2.1]{Luo1992-coordinate-descent}
with these two update rules.  The first is also known
as the Gauss-Southwell rule.
\end{proof}

This proof demonstrates that, as long as $A_{i,i} = 0$ for all
the diagonal entries of the adjacency matrix, then
Algorithm~\ref{alg:katz-columnwise} will converge when
$(\eye - \alpha \mA)$ is positive definite, that is, when
$\alpha < 1/\sigma_{\max}(\mA)$.  We term this algorithm
a Gauss-Southwell procedure because the choice of $j$
in the algorithm is given by the Gauss-Southwell rule.

\section{Experimental Results} \label{sec:experiments}

The previous sections showed three algorithms based on the
Lanczos method, and showed the theoretical convergence
of the column-wise Katz algorithm.  In this section,
we investigate these algorithms numerically.  Algorithms
based on the Lanczos method, in general, are arguably best studied
empirically because their worst-case convergence
properties are often conservative.  These experiments
are designed to shed light on two key questions:
\begin{compactenum}
\item How do these iterative algorithms converge to
the exact solution?
\item Are the techniques faster than a conjugate
gradient based algorithm?
\end{compactenum}
Note that column-wise commute time measure is
a special case for reasons we discuss below, and we only
investigate the accuracy of our procedure for that problem.

\paragraph{Experimental settings} We implemented our methods
in {\sc{Matlab}} and {\sc{Matlab}} mex codes. All computations and
timings were done in Linux on a desktop with a Core i7-960
processor (4 core, 2.8GHz) with 24GB of memory.
As mentioned in the introduction, all of the experimental
code is available from
\url{http://cs.purdue.edu/homes/dgleich/publications/2011/codes/fast-katz/}.

We first describe the data
used in the experiments.  These data were also used
in the experiment about localization in the Katz scores
from the previous section.

%Evaluate the algorithms empirically to ascertain their real-world performance.  Let us outline the experiments:
%
%0) Data:
%
%Graph, time required for 200 svd components
%
%i) Pairwise algorithms:
%---- convergence behavior for commute time
%---- convergence behavior for Katz scores
%---- relevative performance compared with conjugate gradient
%---- sample runtimes
%
%ii) Column-wise algorithms:
%---- convergence behavior of the "top-k" sets
%---- sample runtimes
%---- graph access vs. localization for Katz

\subsection{Data}
\label{sec:data}

We use three publicly
available sources and three graphs we collected ourselves.
The majority of the data comes from the SNAP collection
\cite{Leskovec2010-snap} of which, we use
ca-GrQc, ca-HepTh, ca-CondMat, ca-AstroPh, email-Enron,
email-EuAll~\cite{Leskovec-2007-densification},
wiki-Vote~\cite{Leskovec-2010-signed},
soc-Epinions1~\cite{Richardson-2003-trust},
and soc-Slashdot0811~\cite{Leskovec-2009-community-structure}.
Besides these, the graph tapir
is from \citet{Gilbert-2002-meshpart},
the graph Stanford3 is from \cite{Traud-2011-facebook},
and both graphs stanford-cs~\cite{hirai2000-webbase}
and hollywood-2009 \cite{Boldi-2011-layered} are distributed via the webgraph
framework~\cite{boldi2004-webgraph}.  The graph stanford-cs is
actually a subset of the webbase-2001 graph~\cite{hirai2000-webbase},
restricted only to the pages in the domain cs.stanford.edu.
All graphs are symmetrized (if non-symmetric) and stripped of 
any self-loops, edge weights, and extraneous connected components.

\paragraph{DBLP}
We extracted the DBLP coauthors graph from a recent snapshot (2005-2008) of the DBLP
database. We considered only nodes (authors) that have at least three publications in
the snapshot. There is an undirected edge between two authors if they have coauthored a
paper. From the resulting set of nodes, we randomly chose a sample of 100,000 nodes,
extracted the largest connected component, and discarded any weights on the edges.

\paragraph{arXiv}
This dataset contains another coauthorship graph extracted by a snapshot (1990-2000) of
arXiv, which is an e-print service owned, operated and funded by Cornell University,
and which contains bibliographies in many fields including computer science and
physics. This graph is much denser than DBLP. Again, we extracted the largest connected
component of this graph and only work with that subset.

\paragraph{Flickr contacts}
Flickr is a popular online-community for sharing photos, with millions of users. The node set
represents users, and the directed edges are between contacts.
We start with a crawl extracted from Flickr
in May 2006. This crawl began with a single user and continued until the total
personalized PageRank on the set of uncrawled nodes was less than 0.0001. The result of
the crawl was a graph with 820,878 nodes and 9,837,214 edges. In order to create a sub-
graph suitable for our experimentation we performed the following steps. First, we
created a graph from Flickr by taking all the contact relationships that were
reciprocal, and second, we again took the largest connected component.
(This network is now available from the University of Florida sparse
matrix collection~\cite{Davis2011-matrix}).

Table~\ref{tab:data} presents some elementary statistics about these graphs.
We also include the time to compute the truncated singular value decomposition
for the first 200 singular values and vectors using the
ARPACK library~\cite{lehoucq1997-arpack}
in Matlab's \texttt{svds} routine.
This time reflects the
work it would take to use the standard low-rank preprocessing algorithm for
Katz scores on the network \cite{cikm03}.

\begin{table}
\caption{The networks studied in the experiments.  The first
five columns are self explanatory.  The last two columns show
the largest singular value of the network, which is also
the matrix 2-norm, and the time taken to compute the largest 200
singular values and vectors.}
\label{tab:data}
\small
\begin{tabularx}{\linewidth}{lXXXXXX}
\toprule
Graph & Nodes & Edges & Avg.~Deg. & Max~Deg. & $\|\mA\|_2$ & \rlap{SVD~(sec.)} \\
\midrule
               tapir &    1024 &     2846 &   5.56 &    24 &    6.9078 & 2.2 \\
         stanford-cs &    2759 &    10270 &   7.44 &   303 &   39.8213 & 8.9 \\
             ca-GrQc &    4158 &    13422 &   6.46 &    81 &   45.6166 & 16.2 \\
            ca-HepTh &    8638 &    24806 &   5.74 &    65 &   31.0348 & 31.5 \\
          ca-CondMat &   21363 &    91286 &   8.55 &   279 &   37.8897 & 78.6 \\
           wiki-Vote &    7066 &   100736 &  28.51 &  1065 &  138.1502 & 28.5 \\
            ca-HepPh &   11204 &   117619 &  21.00 &   491 &  244.9349 & 49.5 \\
                dblp &   93156 &   178145 &   3.82 &   260 &   33.6180 & 391.0 \\
         email-Enron &   33696 &   180811 &  10.73 &  1383 &  118.4177 & 119.5 \\
          ca-AstroPh &   17903 &   196972 &  22.00 &   504 &   94.4296 & 62.3 \\
         email-EuAll &  224832 &   339925 &   3.02 &  7636 &  102.5365 & 935.3 \\
       soc-Epinions1 &   75877 &   405739 &  10.69 &  3044 &  184.1751 & 324.6 \\
    soc-Slashdot0811 &   77360 &   469180 &  12.13 &  2539 &  131.3418 & 359.1 \\
               arxiv &   86376 &   517563 &  11.98 &  1253 &   99.3319 & 241.2 \\
           Stanford3 &   11586 &   568309 &  98.10 &  1172 &  212.4606 & 48.8 \\
             flickr2 &  513969 &  3190452 &  12.41 &  4369 &  663.3587 & 3418.7 \\
      hollywood-2009 & 1069126 & 56306653 & 105.33 & 11467 & 2246.5596 & 5998.9 \\
\bottomrule			
\end{tabularx}
\end{table}

\subsection{Pairwise commute scores}
\label{sec:experiments-pairwise-commute}

From this data, we now study the performance of our algorithm for
pairwise commute scores, and compare it against
solving the linear system $\mLhat \vx = (\ve_i - \ve_j)$ using
the conjugate gradient method (CG).  At each step of CG,
we use the approximation $(\ve_i - \ve_j)^T \vx\itn{k}$,
where $\vx\itn{k}$ is the $k$h iterate.
The convergence check in CG
was either the pairwise
element value changed by less than the tolerance, checked by
taking a relative difference between steps, or the 2-norm of
the residual fell below the tolerance.

The first figure we present shows the
result of running Algorithm~\ref{alg:pairwise-commute}
on a single pairwise commute time
problem for few graphs (Figure~\ref{fig:pairwise-commute-conv}).
The upper row of figures show the actual bounds themselves. 
The bottom row of figures shows the relative error 
that would result from using the bounds as an
approximate solution.  We show the same results for CG.  The
exact solution was computed by using MINRES~\cite{Paige-1975-indefinite}
to solve the
same system as CG to a tolerance of $10^{-10}$.
For all of the graphs, we used $\lmin = 10^{-4}$ and
$\lmax=\normof[1]{\mLhat}$.  Again using ARPACK, we verified that
the smallest eigenvalue of each of the Laplacian matrices was larger than
$\lmin$.
We chose the vertices for the pair from among the high-degree
vertices for no particular reason.
Both Algorithm~\ref{alg:pairwise-commute} and CG used a tolerance
of $10^{-4}$.

\begin{figure}
\includegraphics[width=0.245\linewidth]{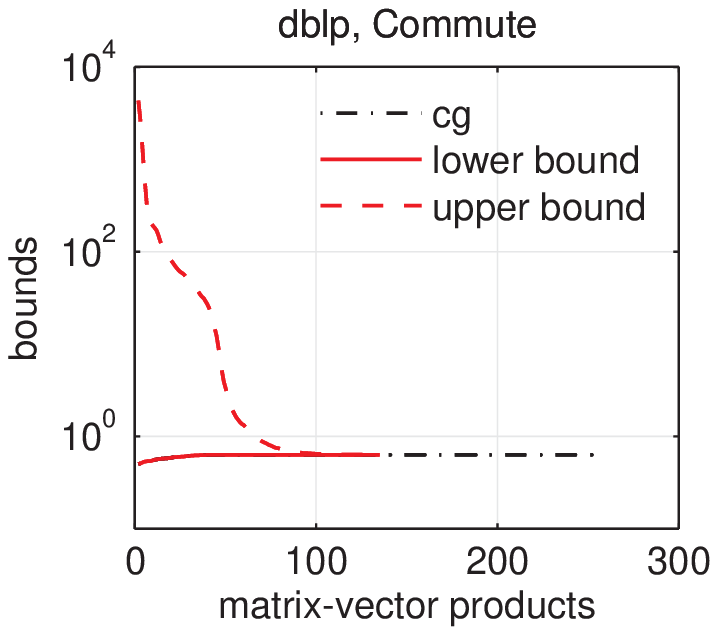}
\includegraphics[width=0.245\linewidth]{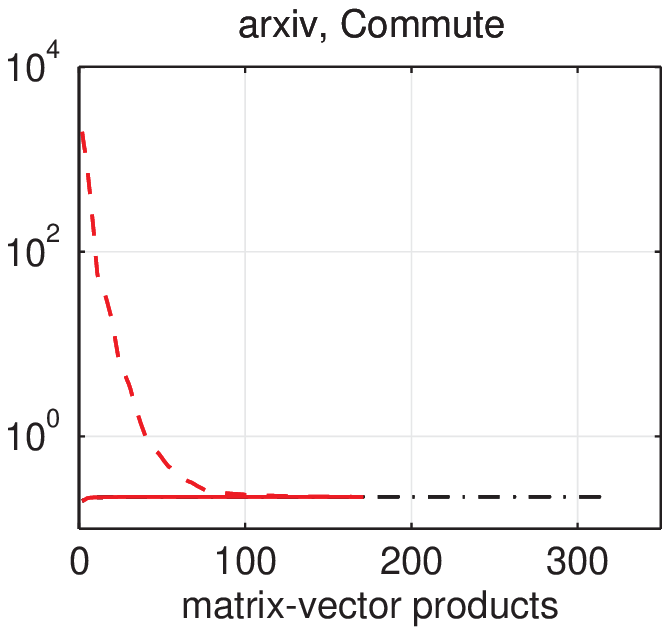}
\includegraphics[width=0.245\linewidth]{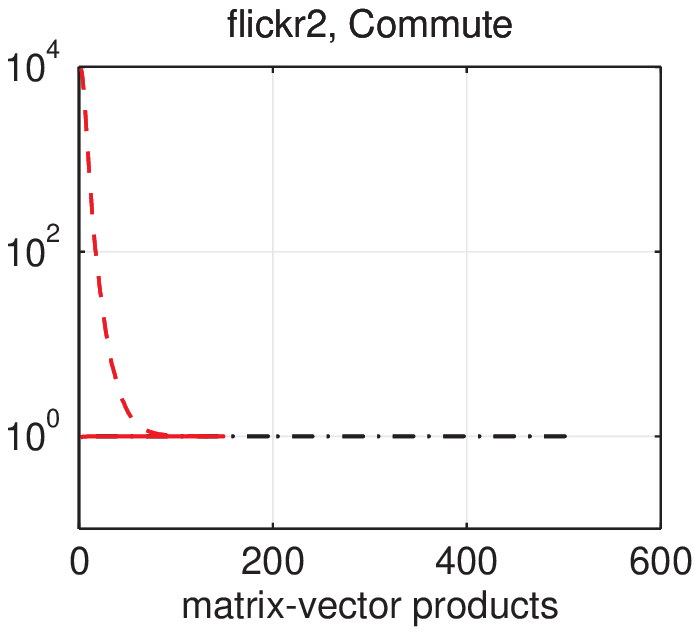}
\includegraphics[width=0.245\linewidth]{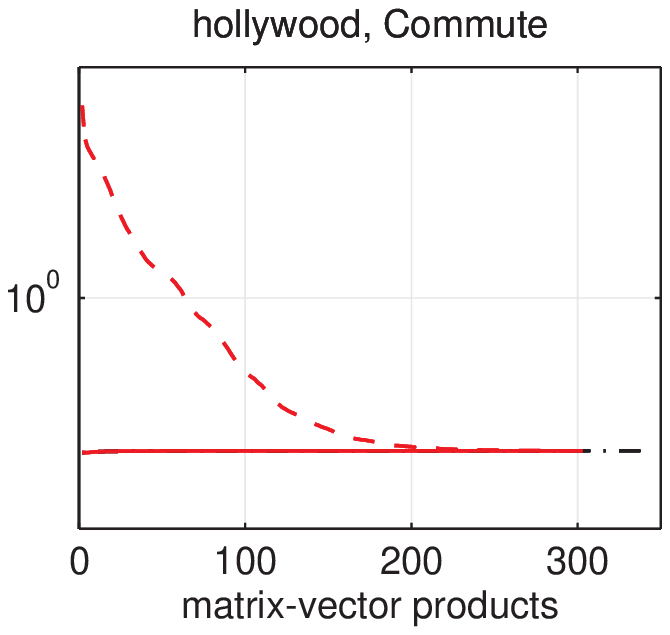}

\includegraphics[width=0.245\linewidth]{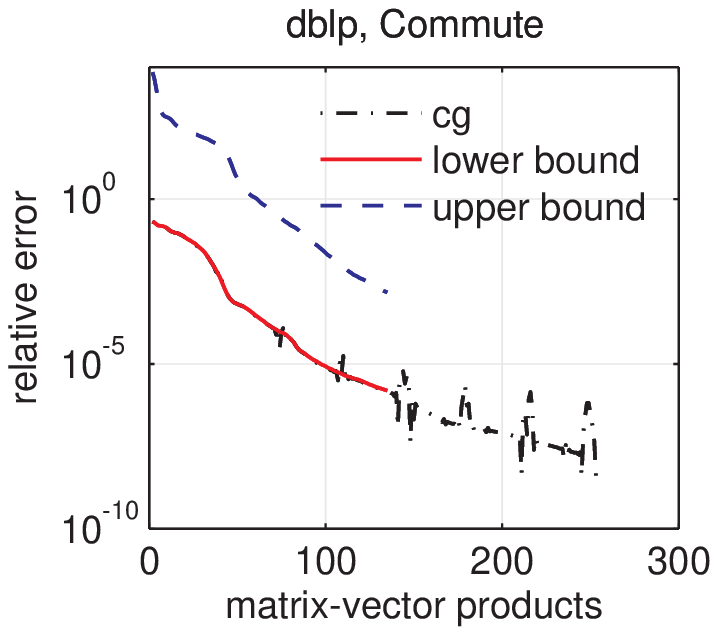}
\includegraphics[width=0.245\linewidth]{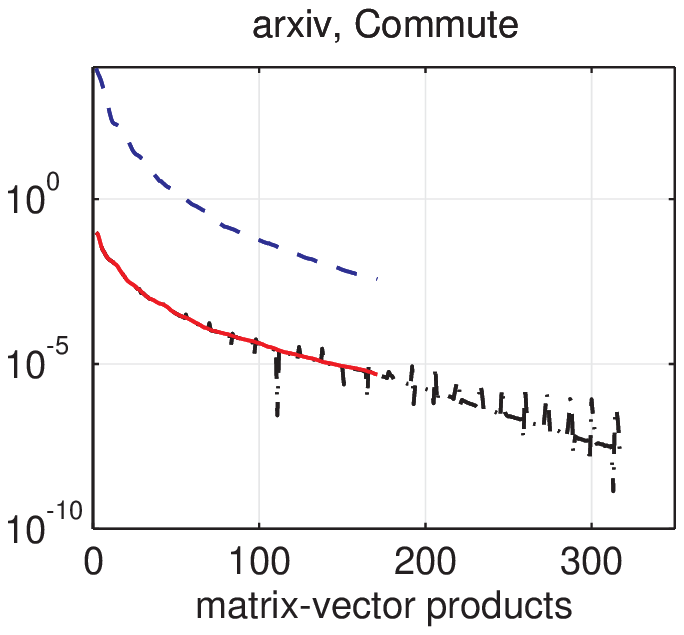}
\includegraphics[width=0.245\linewidth]{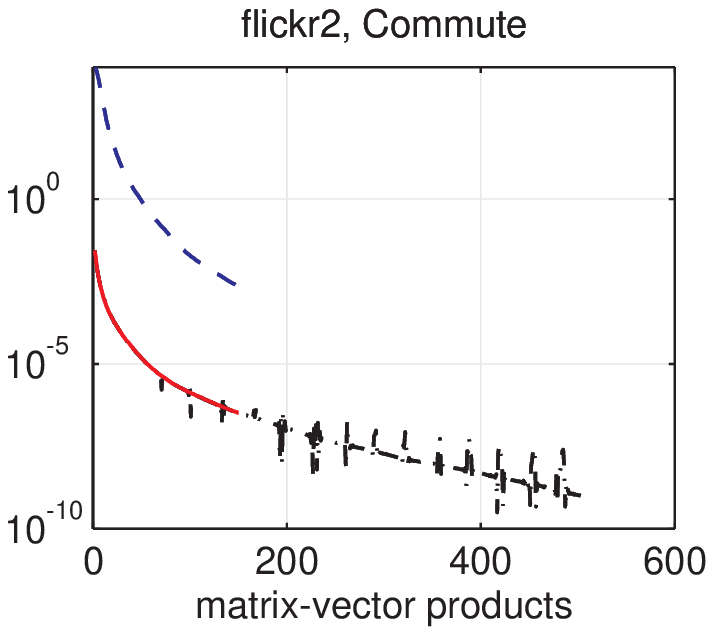}
\includegraphics[width=0.245\linewidth]{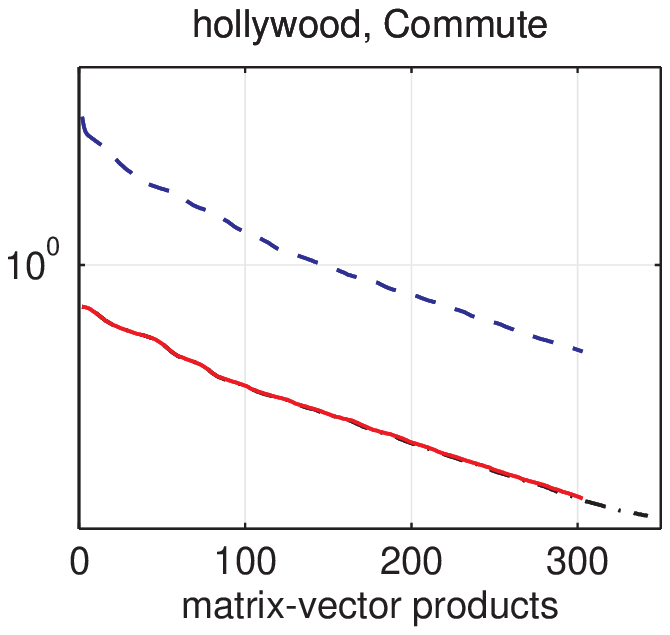}

\caption{Convergence results for pairwise commute times.
(Top row) Each figure shows the upper
and lower bounds at each iteration of Algorithm~\ref{alg:pairwise-commute}
for the graphs dblp, arxiv, flickr2, and hollywood-2009.
(Bottom row)  For the same graphs, each figure shows the
relative size of the error, $(v_{\text{alg}} - v_{\text{exact}})/v_{\text{exact}}$
in the upper and lower bounds at each iteration.
In both cases, we also show the same data from the conjugate gradient
algorithm.  See Section~\ref{sec:experiments-pairwise-commute} for our discussion.}
\label{fig:pairwise-commute-conv}
\end{figure}

In the figure, the upper bounds and lower bounds ``trap'' the solution
from above and below. These bounds converge smoothly to the
final solution. For these experiments, the lower bound has smaller error,
and also, this error tracks the performance of CG quite
closely.  This behavior is expected in cases
where the largest eigenvalue of the matrix is well-separated from the remaining
eigenvalues -- a fact that holds for the Laplacians of our graphs,
see \citet{mihail2002-eigenvalue-power-law} and \citet{Chung-2003-Eigenvalues}
for some theoretical justification.
When this happens, the Lanczos procedure underlying both
our technique and CG quickly produces an accurate estimate of the true
largest eigenvalue, which in turn eliminates any effect due to our initial
overestimate of the largest eigenvalue.  (Recall from Algorithm~\ref{alg:pairwise-commute}
that the estimate of $\lmax$ is present in the computation of the lower-bound
$\underline{b_j}$.)

Here, the conjugate gradient method suffers two problems.  First, because
it does not provide bounds on the score, it is not possible to terminate
it until the residual is small.  Thus,
the conjugate gradient method requires more iterations than
our pairwise algorithm.  Note, however, this result is simply a matter
of detecting when to stop -- both conjugate gradient and our lower-bound
produce similar relative errors for the same work.
Second, the relative error for conjugate gradient displays erratic behavior.
Such behavior is not unexpected because conjugate gradient optimizes
the $A$-norm of the solution error and it is not guaranteed
to provide smooth convergence in true error norm.  These oscillations
make early termination of the CG algorithm problematic, whereas
no such issues occur for the upper and lower bounds from our
algorithm. We speculate that the seemingly smooth convergence behavior that we observe for the upper and lower bound estimates may be rooted in the convergence behavior of the largest Ritz value of the tridiagonal matrix associated with Lanczos, but a better understanding of this issue will require further exploration.

\subsection{Pairwise Katz scores}
\label{sec:experiments-pairwise-katz}
We next show the same type of figure but for
the pairwise Katz scores instead; see
Figure~\ref{fig:pairwise-katz-conv}.
We use a value of $\alpha$
that makes $\mI - \alpha \mA$ nearly indefinite.  Such a value
produces the slowest convergence in our experience.  The
particular value we use is $\alpha = 1/(\|\mA\|_2 + 1)$,
which we call ``hard alpha'' in some of the figure titles.
For all of the graphs, we again used $\lmin = 10^{-4}$ and
$\lmax=\normof[1]{\mLhat}$. This value of $\lmin$ is
smaller than the smallest eigenvalue of $\eye - \alpha \mA$
for all the graphs.
Also, the vertex
pairs are the same as those used for
Figure~\ref{fig:pairwise-commute-conv}.
For pairwise Katz scores, the baseline approach
involves solving the linear system $(\eye - \alpha \mA) \vx = \ve_j$,
again using the conjugate gradient method (CG).  At each step of CG,
we use the approximation $\ve_i^T \vx\itn{k}$,
where $\vx\itn{k}$ is the $k$h iterate.
We use the same convergence check as in the CG baseline for
commute time.
For these problems, we also
evaluated techniques based on the Neumann series for $\eye-\alpha
\mA$, but those took over 100 times as many iterations as CG or
our pairwise approach. The Neumann series is the same algorithm
used in \cite{Wang2007-link-prediction} but customized for the
linear system, not matrix inverse, which is a
more appropriate comparison for the pairwise case.
Finally, the
exact solution was again computed by using MINRES~\cite{Paige-1975-indefinite}
to solve the same system as CG to a tolerance of $10^{-14}$.

\begin{figure}
\includegraphics[width=0.245\linewidth]{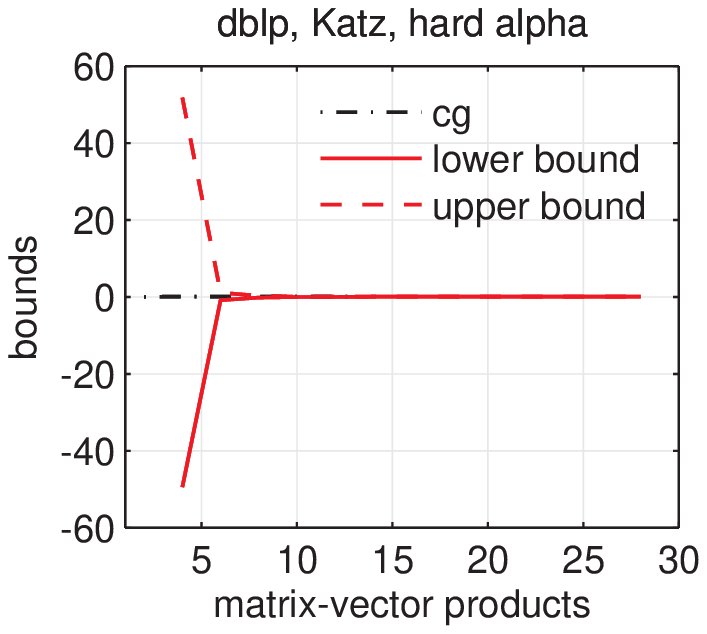}
\includegraphics[width=0.245\linewidth]{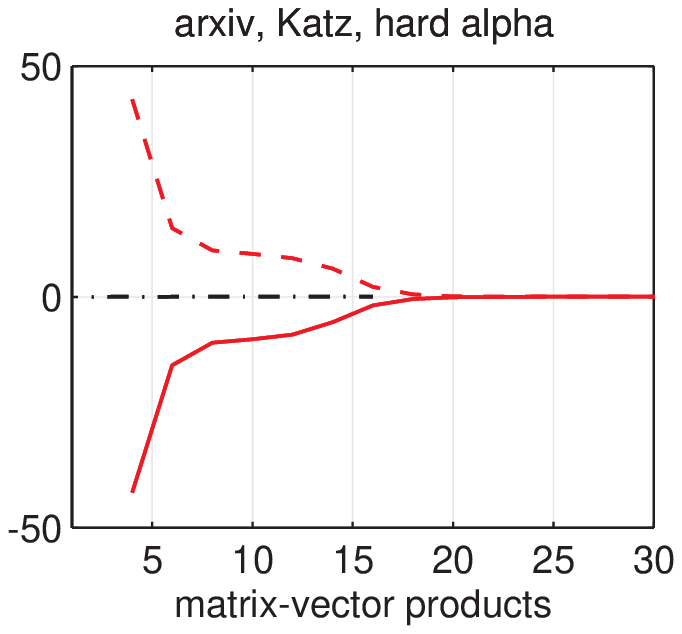}
\includegraphics[width=0.245\linewidth]{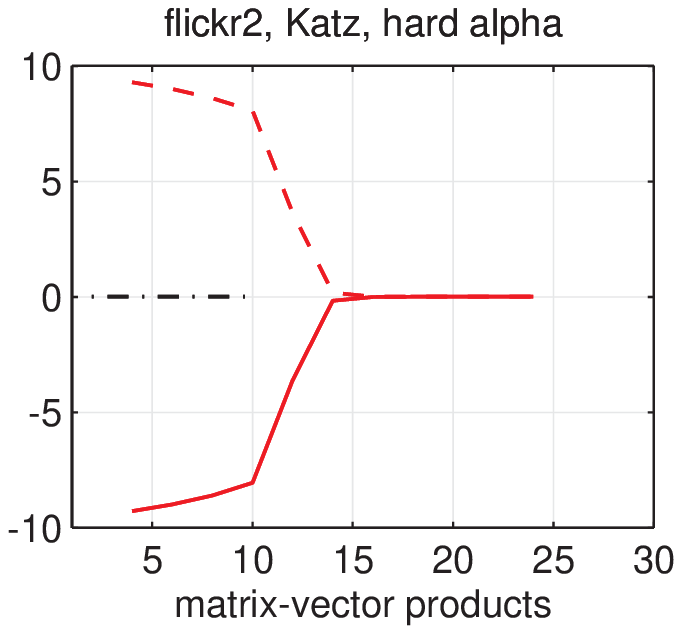}
\includegraphics[width=0.245\linewidth]{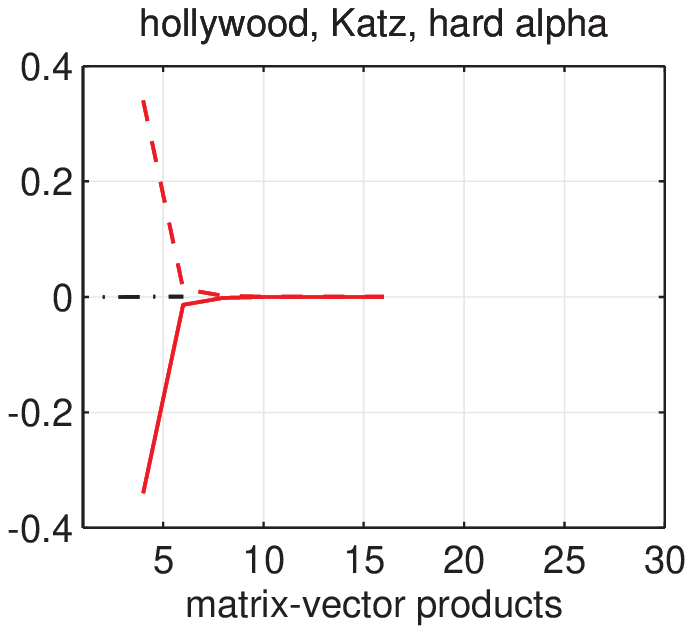}

\includegraphics[width=0.245\linewidth]{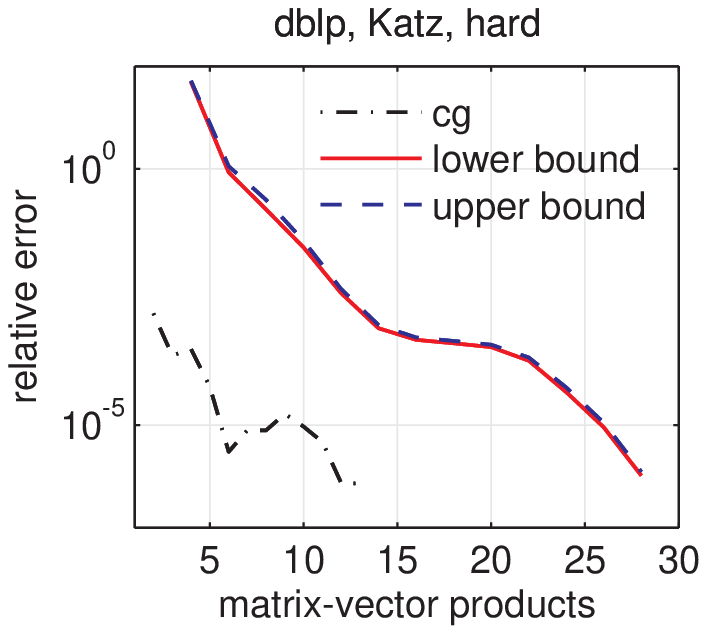}
\includegraphics[width=0.245\linewidth]{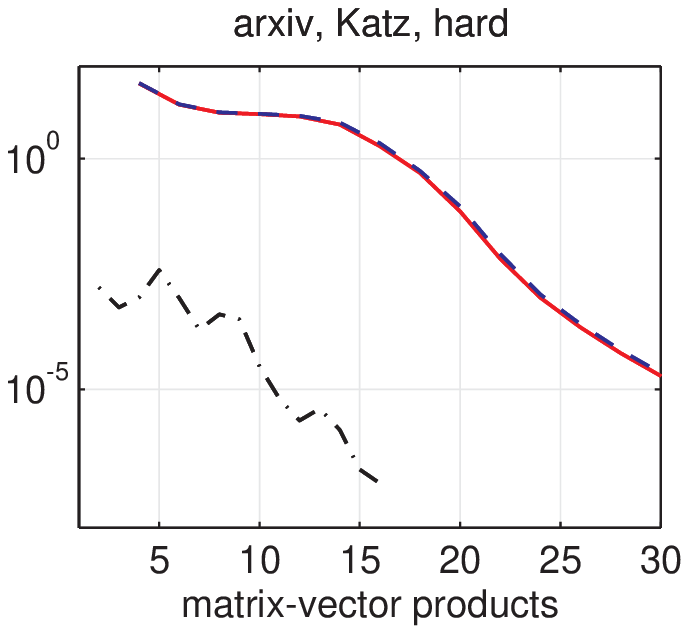}
\includegraphics[width=0.245\linewidth]{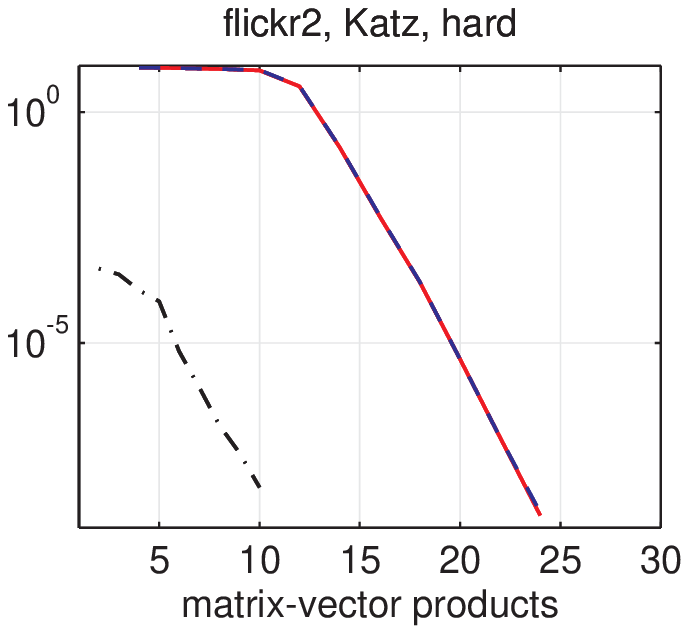}
\includegraphics[width=0.245\linewidth]{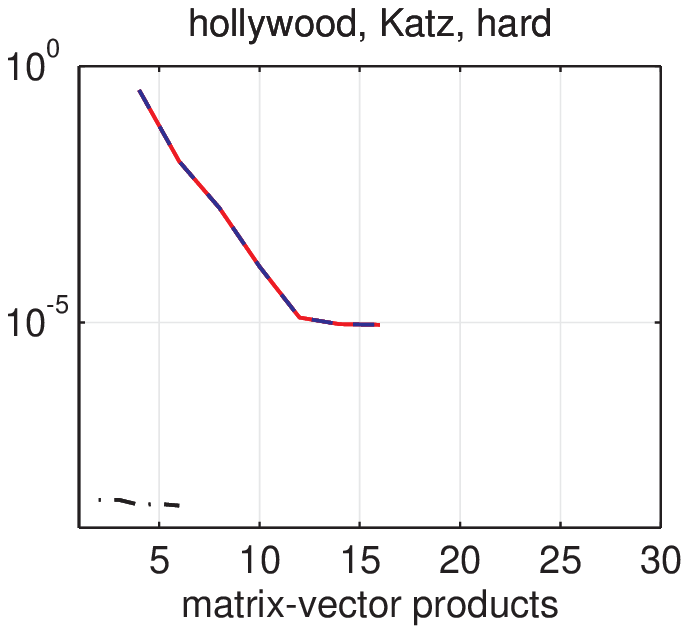}

\caption{Convergence results for pairwise Katz scores.
(Top row.) Each figure shows the upper
and lower bounds at each iteration of Algorithm~\ref{alg:pairwise-katz}
for the graphs dblp, arxiv, flickr2, and hollywood-2009.
(Bottom row.)  For the same graphs, each figure shows the
relative size of the error, $(v_{\text{alg}} - v_{\text{exact}})/v_{\text{exact}}$
in the upper and lower bounds at each iteration.
In both cases, we also show the same data from the conjugate gradient
algorithm.  See Section~\ref{sec:experiments-pairwise-katz} for discussion.}
\label{fig:pairwise-katz-conv}
\end{figure}

A distinct difference from the commute-time results
is that both the lower and upper bounds
converge similarly and have similar error.
This occurs because of the symmetry
in the upper and lower bounds that results from using
the MMQ algorithm twice on the form:
$(1/4)[ (\ve_i + \ve_j)^T (\mI - \alpha \mA)^{-1} (\ve_i + \ve_j)
- (\ve_i - \ve_j)^T (\mI - \alpha \mA)^{-1} (\ve_i - \ve_j) ].$
In comparison with the conjugate gradient method, our pairwise
algorithm is slower to converge.
While the conjugate gradient
method appears to outperform our pairwise algorithms here, recall
that it does not provide any approximation guarantees.
Also, the two matrix-vector product in Algorithm~\ref{alg:pairwise-katz}
can easily be merged into a single ``combined'' matrix-vector
product algorithm.  As we discuss further in the conclusion, such
an implementation would reduce the difference in runtime
between the two methods.

\subsection{Relative matrix-vector products in pairwise algorithms}

Thus far, we have detailed a few experiments describing how the pairwise
algorithms converge.  In these cases, we compared against
the conjugate gradient algorithm for a single pair of vertices
on each graph.  In this experiment, we examine the number of
matrix-vector products that each algorithm requires for a much
larger set of vertex pairs.  Let us first describe how we picked
the vertices for the pairwise comparison.  There were two types of
vertex pairs chosen: purely random, and degree-correlated.  The
purely random choices are simple: pick a random permutation of
the vertex numbers, then use pairs of vertices from this ordering.
The degree correlated pairs were picked by first sorting the
vertices by degree in decreasing order, then picking
the 1st, 2nd, 3rd, 4th, 5th, 10th, 20th, 30th, 40th, 50th,
100th,\ldots vertices from this ordering, and finally, use all
vertex pairs in this subset.  Note that for commute time,
we only used the 1st, 5th, 10th, 50th, 100th,\ldots.
vertices to reduce the total computation time.
For the pairwise commute times, we used 20 random pairs.
and used 100 random pairs for pairwise Katz scores.

In Figure~\ref{fig:pairwise-matvec-perf}, we show the matrix-vector
performance ratio between our pair-wise algorithms and
conjugate gradient.  Let $k_{\text{cg}}$ be the
number of matrix-vector products until CG converges
to a tolerance of $10^{-4}$ (as in previous experiments);
and let $k_{\text{alg}}$ be the number of matrix-vector
products until our algorithm converges.
The performance ratio is
\[
\frac{k_{\text{cg}} - k_{\text{alg}}}{k_{\text{cg}}},
\]
which has a value of $0$ when the two algorithms take
the same number of matrix-vector products, the value 1
when our algorithm takes 0 matrix-vector products,
and the value -1 (or -2) when our algorithm takes twice (or thrice) as
many matrix-vector products as CG.
We display the results as a box-plot of the results from all trials.  There was
no systematic difference in the results between the two types
of vertex pairs (random or degree correlated).

\begin{figure}
\centering
\begin{minipage}{0.48\linewidth}
\includegraphics[width=\linewidth]{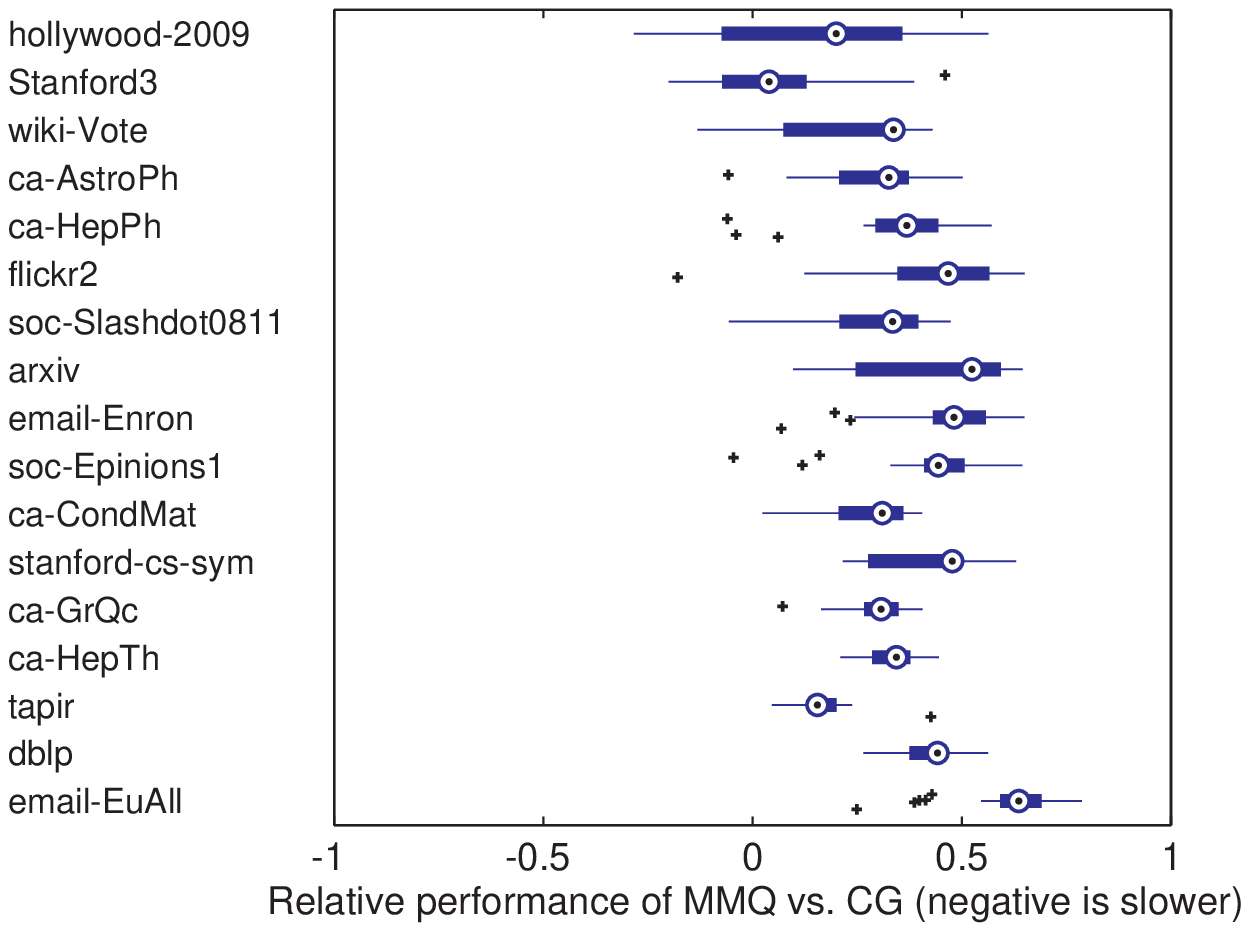}
\end{minipage}
\hfil
\begin{minipage}{0.48\linewidth}
\includegraphics[width=\linewidth]{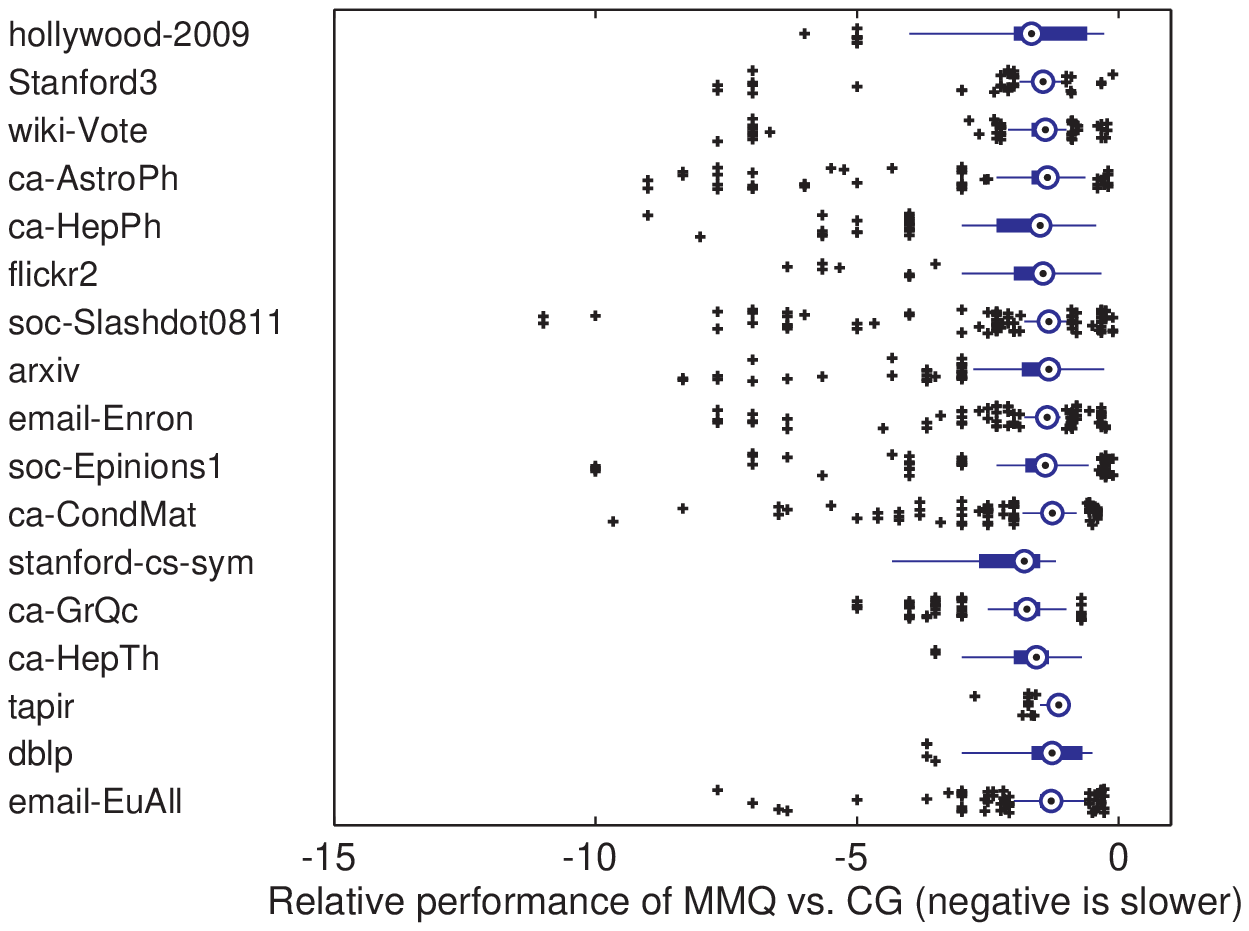}
\end{minipage}
\caption{(Left) Relative performance between 
Algorithm~\ref{alg:pairwise-commute} and 
conjugate gradient for pairwise commute times.
(Right) Relative performance between 
Algorithm~\ref{alg:pairwise-katz} and
conjugate gradient for pairwise Katz scores.
The relative performance measure is
$(k_{\text{cg}} - k_{\text{alg}})/k_{\text{cg}}$,
where $k$ is the number of matrix-vector products
taken by each approach.
}
\label{fig:pairwise-matvec-perf}
\end{figure}

These results show that the small sample in the previous section is
fairly representative of the overall performance difference.
In general, our commute time algorithm uses fewer matrix-vector
products than conjugate gradient.  We suspect this result is
due to the ability to stop early as explained in
Section~\ref{sec:experiments-pairwise-commute}.  And, as also
observed in Section~\ref{sec:experiments-pairwise-katz},
our pairwise Katz algorithm tends to take 2-3 times
as many matrix vector products as conjugate gradient.
These results again used the same ``hard alpha'' value.

\subsection{Column-wise commute times}
\label{sec:experiment-columnwise-commute}

Our next set of results concerns the precision of our
approximation to the column-wise commute time scores.
Because the output of our column-wise commute time algorithm
is based on a coarse approximation of the diagonal elements
of the inverse, we do not expect these scores to converge
to their exact values as we increase the work in the
algorithm.  Consequently, we study the results in
terms of the \emph{precision at $k$} measure.
Recall that the motivation for studying these column-wise measures
is not to get the column scores precisely correct, but
rather to identify the closest nodes to a given query
or target node.  That is, we are most interested in
the smallest elements of a column of the commute time
matrix.  Given a target node $i$, let $S_{k}^{\text{alg}}$ be the $k$
closest nodes to $i$ in terms of our algorithm.  Also, let
$S_{k}^{*}$ be the $k$ closest nodes to $i$ in terms of
the exact commute time.  (See below for how we compute this set.)
The precision at $k$ measure is
\[ | S_{k}^* \cap S_k^{\text{alg}} | / k. \]
In words, this formula computes the fraction of the true
set of $k$ nodes that our algorithm identifies.

We ran the algorithm from Section~\ref{sec:columnwise-commute}
with a tolerance of $10^{-16}$ to evaluate the maximum accuracy
possible with this approach.  We choose 50 target nodes randomly
and also based on the same degree sequence sampling mentioned
in the last section.  For values of $k$ between 5 and 100,
we show a box-plot of the precision at $k$ scores for four networks
in Figure~\ref{fig:commute-columnwise-precision}.
In the same figure, we also show the result of using the heuristic
$C_{i,j} \approx \frac{1}{D_{i,i}} + \frac{1}{D_{j,j}}$ suggested
by \citet{vonLuxburg-2010-commute}.  This heuristic is called
``inverse degree'' in the figure, because it shows that the
set $S_k^{*}$ should look like the set of $k$ nodes with
highest degree or smallest inverse degree.

\begin{figure}

\includegraphics[width=0.245\linewidth]{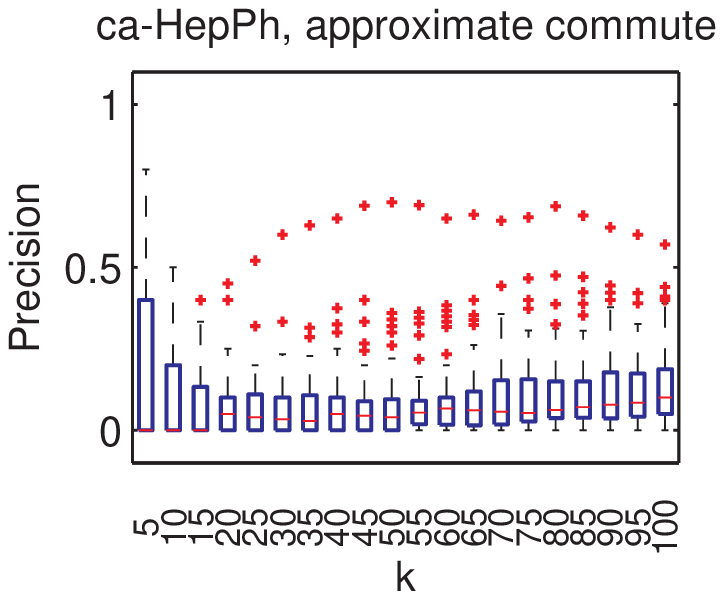}
\includegraphics[width=0.245\linewidth]{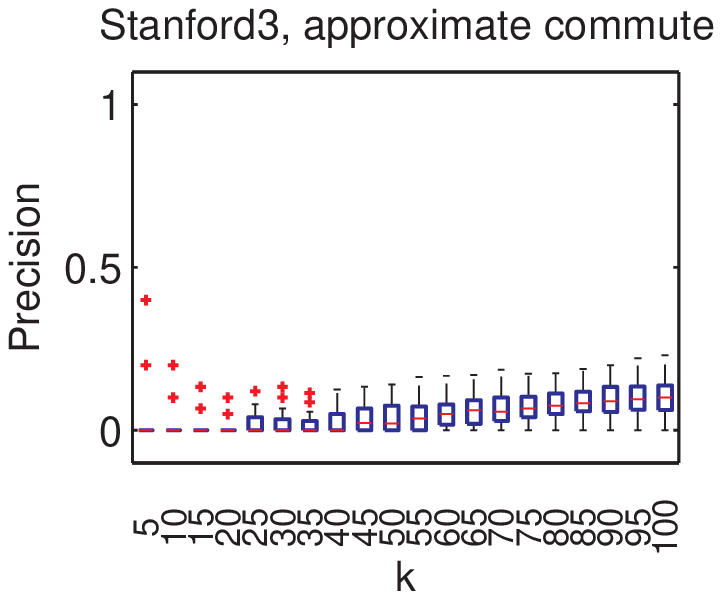}
\includegraphics[width=0.245\linewidth]{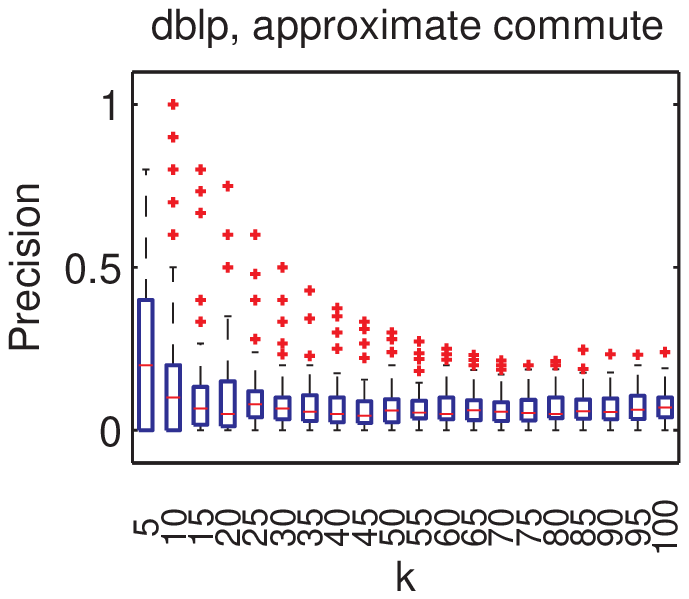}
\includegraphics[width=0.245\linewidth]{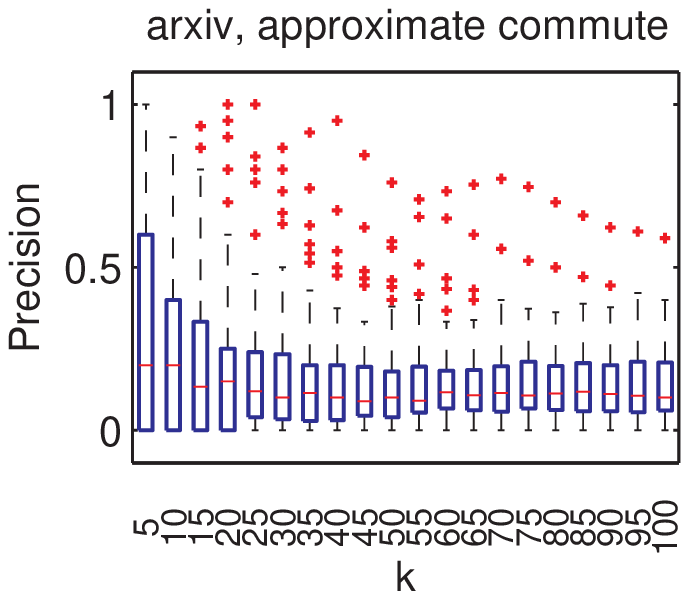}

\includegraphics[width=0.245\linewidth]{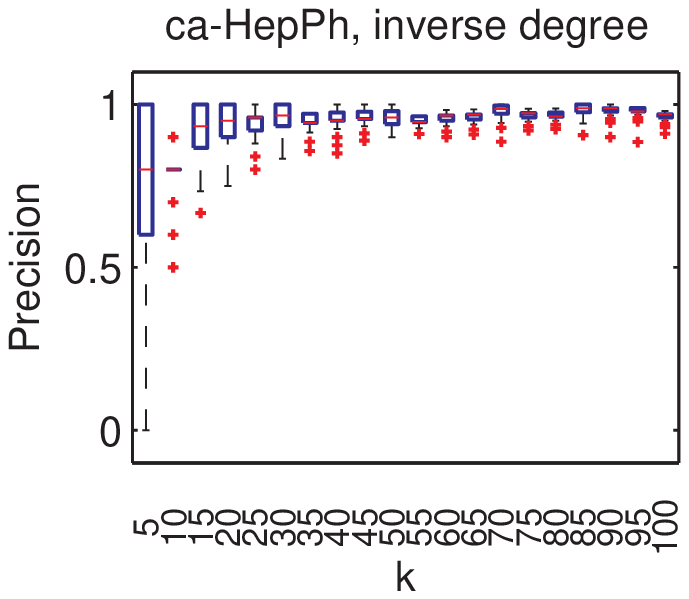}
\includegraphics[width=0.245\linewidth]{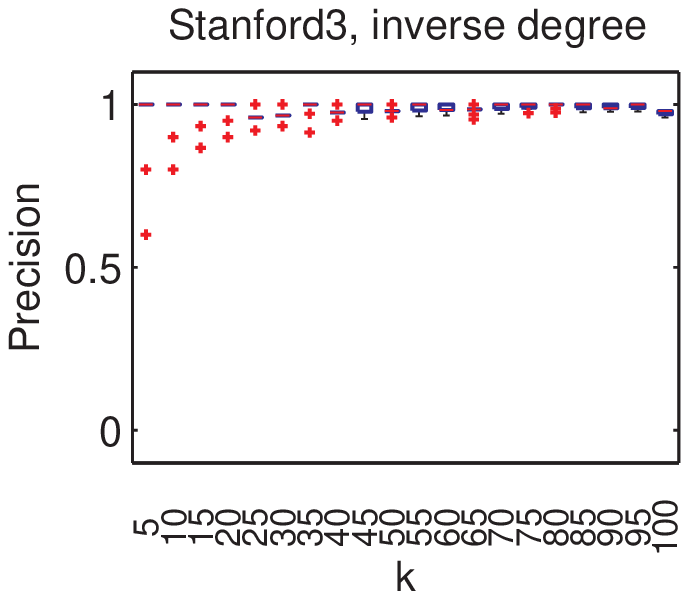}
\includegraphics[width=0.245\linewidth]{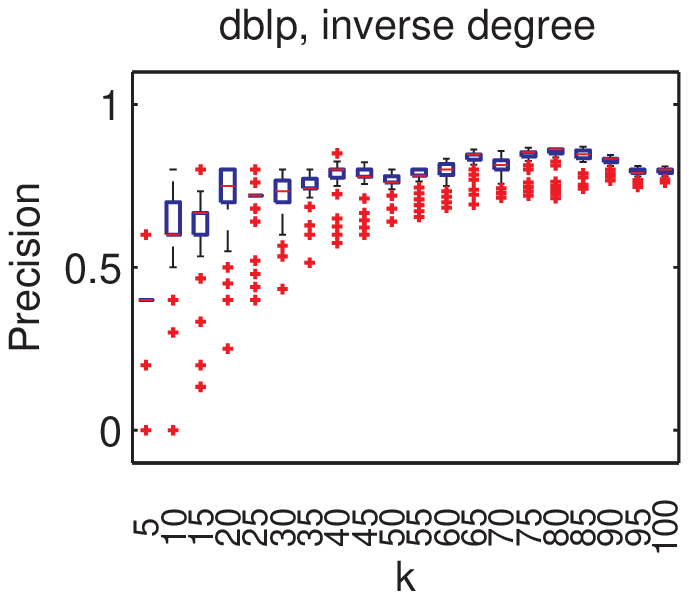}
\includegraphics[width=0.245\linewidth]{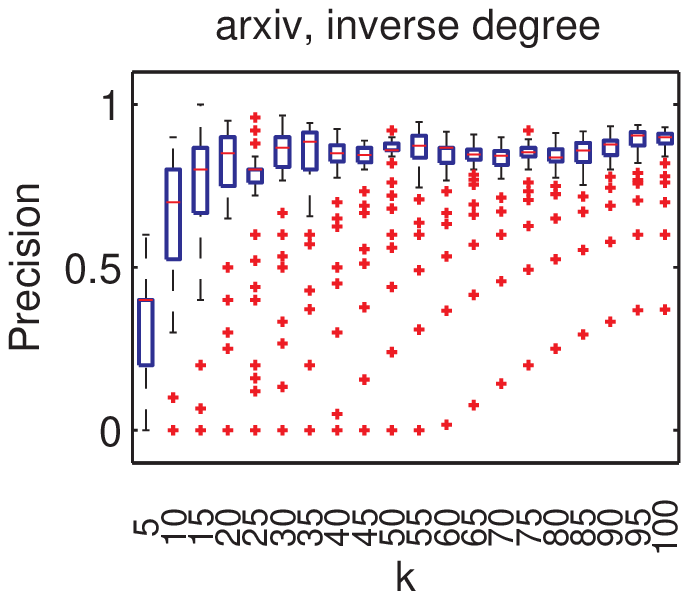}

\caption{Precision at $k$ for the column-wise commute time approximations (top)
over a few hundred trial columns.
Precision at $k$ for the inverse degree heuristic (bottom) over the
same columns.  These figures show standard box-plots of the
result for each column.}
\label{fig:commute-columnwise-precision}
\end{figure}

These results show that our approach for estimating a column
of the commute time matrix provides only partial
information about the true set.  However, these experiments
reinforce the theoretical discussion in \citet{vonLuxburg-2010-commute}
that commute time provides little information beyond the
degree distribution.  Consequently, the results from
our algorithm may provide more useful information in practice.
Although such a conclusion would require us to formalize
the nature of the approximation error in this algorithm,
and involve a rather different kind of study.

\paragraph{Exact commute times}
Computing commute times is challenging.
As part of a separate project, the third author of this
paper wrote a program to compute the exact eigenvalue
decomposition of a combinatorial graph Laplacian in
a distributed computing environment using the MPI and
the ScaLAPACK library~\cite{Blackford-1996-scalapack}.
This program
ignores the sparsity in the matrix and treats the problem
as a dense matrix.  We adapted this
software to compute the pseudo-inverse of the graph Laplacian
as well as the commute times.  We were able to run this
code on graphs up to 100,000 nodes using approximately 10-20
nodes of a larger supercomputer.  (The details varied
by graph, and are not relevant for this paper.)
For graphs with less than 20,000 nodes, the same program
will compute all commute-times on the previously mentioned
desktop computer.  Thus, we computed the exact commute
times for all graphs except email-euAll, flickr2, and
hollywood-2009.

\subsection{Column-wise Katz scores}
\label{sec:experiments-columnwise-katz}

We now come to evaluate the local algorithm for Katz
scores.  As with the pairwise algorithms, we first
study the empirical convergence of the algorithm.
However, the evaluation for the convergence here is
rather different.  Recall, again, that the point of
the column-wise algorithms is to find the most closely
related nodes.  For Katz scores, these are the largest
elements in a column (whereas for commute time,
they were the smallest elements in the column).
Thus, we again evaluate each algorithm in terms of the
precision at $k$ for the top-$k$ set generated by our algorithms
and the exact top-$k$ set produced by solving the linear system.
Natural alternatives are other iterative methods and
specialized direct methods that exploit sparsity. The
latter -- including approaches such as truncated
commute time~\cite{sarkar-moore07} -- are beyond
the scope of this work, since they require a
different computational treatment in terms of caching and
parallelization.  Thus, we again use
conjugate gradient (CG) as our point of comparison.
The exact solution is computed by solving
$(\eye - \alpha \mA) \vk_i = \ve_i$, again using the MINRES
method, to a tolerance of $10^{-12}$.

We also look at the Kendall-$\tau$
correlation coefficient between our algorithm's results and the exact top-$k$
set.  This experiment will let us evaluate whether the algorithm
is ordering the true set of top-$k$ results correctly.  Let
$x_{k^\ast}^\text{alg}$ be the scores from our algorithm on the
exact top-$k$ set, and let $x_{k^\ast}^\ast$ be the  true top-$k$ scores.
The $\tau$ coefficients are computed between $x_{k^\ast}^\text{alg}$
and $x_{k^\ast}^\ast$.

\begin{figure}
\centering

\includegraphics[height=0.22\textheight]{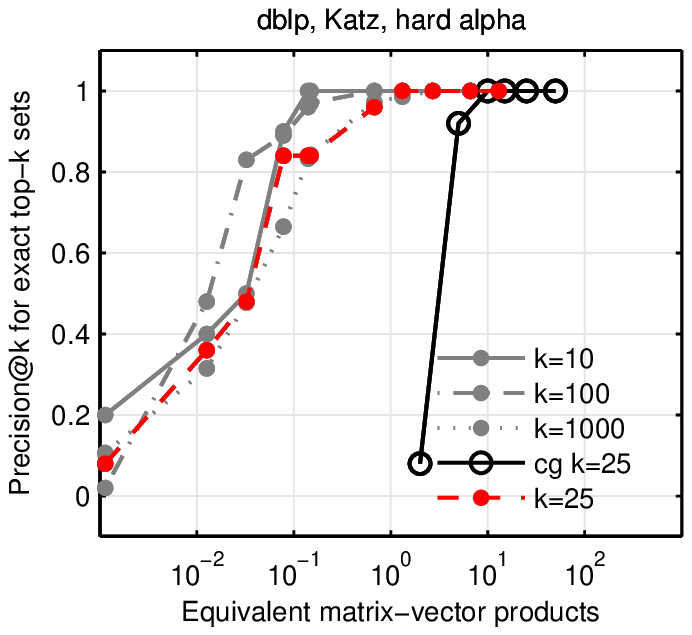}
\includegraphics[height=0.22\textheight]{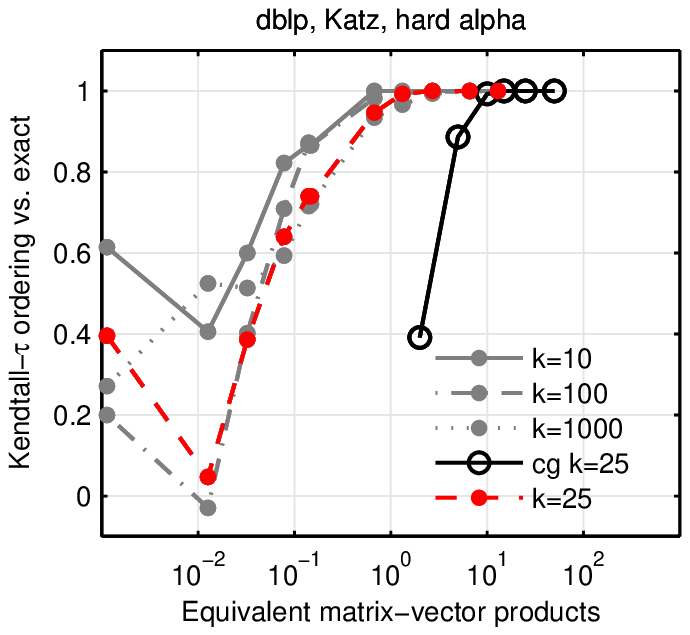}

\includegraphics[height=0.22\textheight]{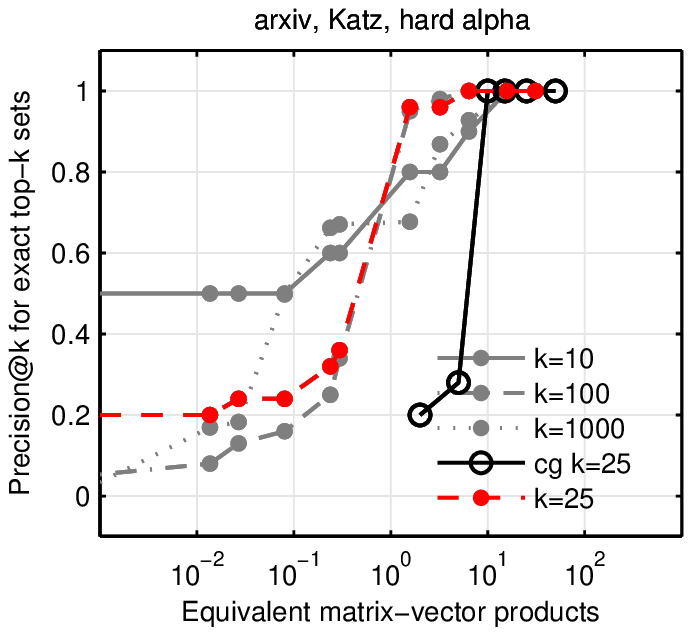}
\includegraphics[height=0.22\textheight]{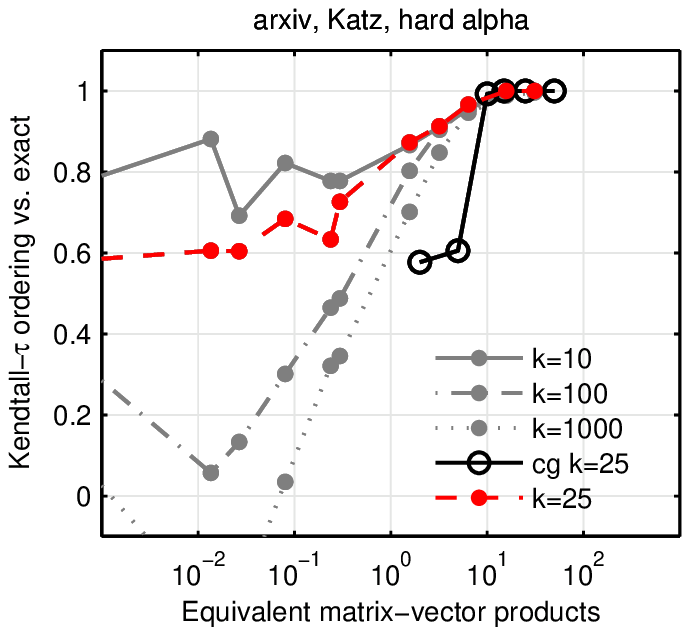}

\includegraphics[height=0.22\textheight]{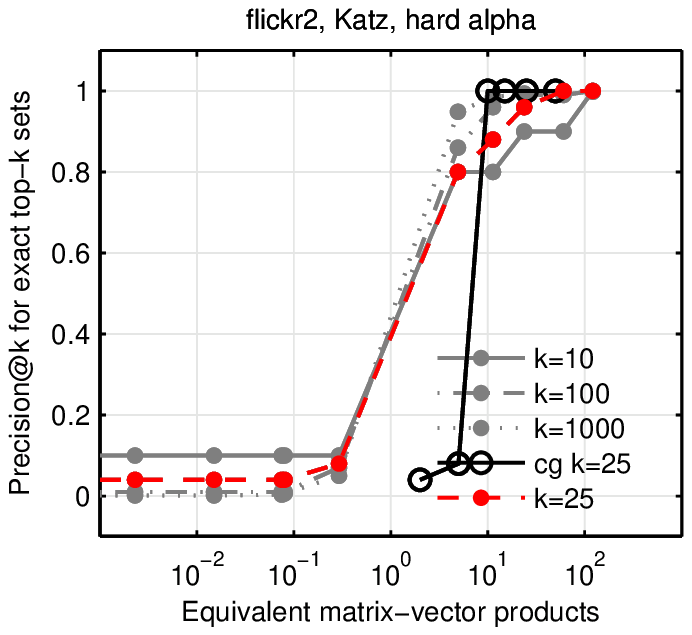}
\includegraphics[height=0.22\textheight]{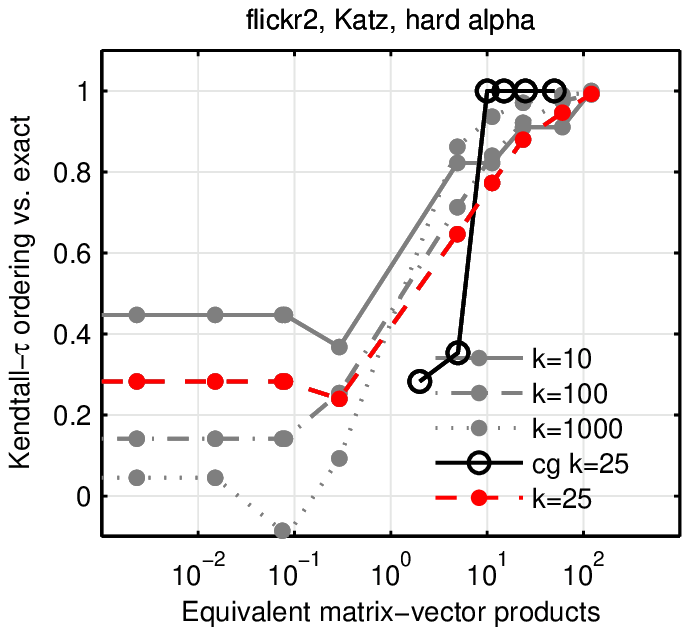}

\includegraphics[height=0.22\textheight]{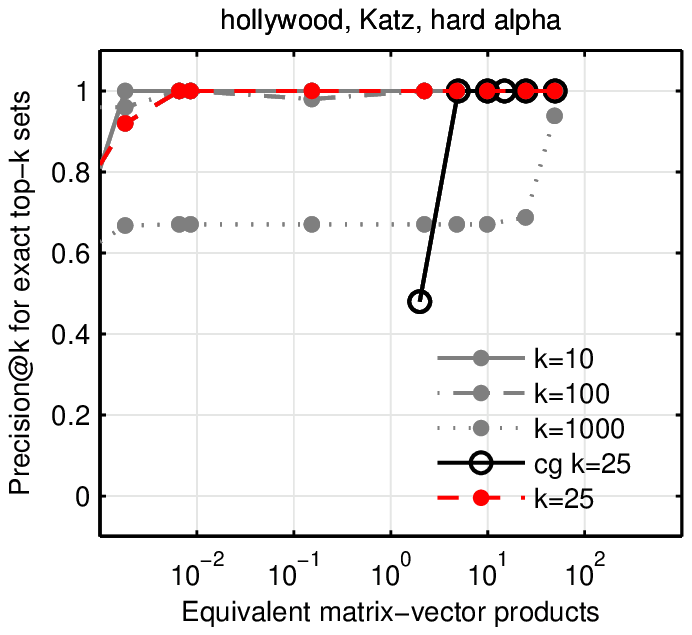}
\includegraphics[height=0.22\textheight]{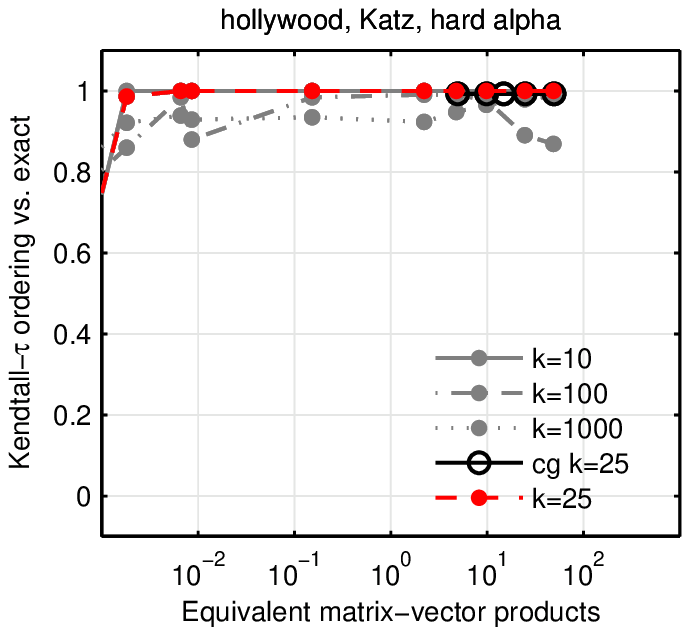}
\caption{Convergence results for our column-wise Katz algorithm in
terms of the precision of the top-$k$ set (left) and the ordering
of the true top-$k$ set (right).  See
Section~\ref{sec:experiments-columnwise-katz} for the discussion.}
\label{fig:katz-columnwise-conv}
\end{figure}

Both of the precision at $k$ and the Kendall-$\tau$
measures should tend to $1$ as we increase the work
in our algorithm.  Indeed, this is what we observe
in Figure~\ref{fig:katz-columnwise-conv}.
For these figures, we pick a vertex with a fairly large
degree and run Algorithm~\ref{alg:katz-columnwise}
with the ``hard alpha'' value mentioned in previous sections.
As the algorithm runs,
we track work with respect to the number of effective
matrix vector products.  An effective matrix-vector product
corresponds to our algorithm examining the same number of edges as a matrix-vector
product.  For example, suppose the algorithm accesses a total of $80$ neighbors
in a graph with $16$ edges.
Then this instance corresponds to $2.5$ effective matrix vector products.
The idea is that the amount of work in one effective matrix vector product
is about the same as the amount of work in one iteration of CG.  Hence,
we can compare algorithms on this ground.
As evident from the legend in each figure, we look at precision at $k$
for four values of $k$, $10, 25, 100, 1000$, and also the Kendall-$\tau$
for these same values.
While all of the measures should tend to 1 as we increase work,
some of the exact top-$k$ results contain
tied values.  Our algorithm has trouble capturing precisely tied values
and the effect is that our Kendall-$\tau$ score does not always tend
to $1$ exactly.

For comparison, we show results from the conjugate gradient
method for the top-$25$ set after $2,5,10,15,25,$ and $50$
matrix-vector products. In
these results, the top-$25$ set is nearly converged after the equivalent of
a single matrix-vector product -- equivalent to just one iteration
of the CG algorithm. The CG algorithm does not provide any useful
information until it converges.  Our top-$k$ algorithm produces
useful partial information in much less work.

\subsection{Runtime}
\label{sec:runtime}

Finally, we show the empirical runtime of our implementations
in Tables~\ref{tab:pairwise-runtime}~and~\ref{tab:katz-columnwise-runtime}.
Table~\ref{tab:pairwise-runtime} describes the runtime of
the two pairwise algorithms.
We show the 25th, 50th, and 75th percentiles of
the time taken to compute the results from Figure~\ref{fig:pairwise-matvec-perf}.
%We did not work to optimize our implementation of this approach,
%and thus, these results are presented as a reference for our performance.
%Our implementation is yet to be fully optimized, but these results seem to provide a reliable indication of performance.
Our implementation is not optimized, and so these results indicate
the current real-world performance of the algorithms.  

Table~\ref{tab:katz-columnwise-runtime} describes the runtime
of the column-wise Katz algorithm.  Here, we picked columns of
the matrix to approximate in two ways: (i) randomly, and (ii)
to sample the entire degree distribution.  As in previous experiments,
we took the 1st, 2nd, 3rd, 4th, 5th, 10th, 20th,\ldots vertices
from the set of vertices sorted in order of decreasing degree.
For each column picked in this manner, we ran
Algorithm~\ref{alg:katz-columnwise} and recorded the wall clock
time.  The 25th, 50th, and 75th percentiles
of these times are shown in the table for each of the two sets of
vertices.

For this algorithm, the \emph{degree} of the target node
has a considerable impact on the algorithm runtime.  This effect
is particularly evident in the flick2 data.  The randomly chosen
columns are found almost instantly, whereas the degree sampled
columns take considerably longer.  A potential explanation for
this behavior is that starting at a vertex with a large degree
will dramatically increase the residual at the first step.  If
these new vertices \emph{also} have a large degree, then this effect
will multiply and the residual will rise for a long time before
converging.  Even in the cases where the algorithm took a long
time to converge, it
only explored a small fraction of the graph (usually about 10\%
of the vertices), and so it retained its locality property.
This property suggests that optimizing our implementation
could reduce these runtimes.

\begin{table}[p]
\caption{Runtime of the pair-wise algorithms.  The ``0.0'' second
entries are rounded down for display.  
These are really just less than 0.1 seconds.
The three columns for each type of problem show the 25th, 50th, and
75th percentiles of the wall-clock time to compute the results
in Figure~\ref{fig:pairwise-matvec-perf}.  
}
\label{tab:pairwise-runtime}
\footnotesize
\begin{tabularx}{\linewidth}{lXXXXXXXX}
\toprule
Graph & Verts. & Avg. & \multicolumn{3}{l}{Pair-wise Katz} & \multicolumn{3}{l}{Pair-wise commute} \\
& & Deg. & \multicolumn{3}{l}{runtime (sec.)} & \multicolumn{3}{l}{runtime (sec.)} \\
\cmidrule{4-9}
& & & 25\% & Median & 75\% & 25\% & Median & 75\%\\
\midrule
               tapir &    1024 &   5.6 &    0.0 &    0.0 &    0.0 &    0.0 &    0.0 &    0.1 \\
         stanford-cs &    2759 &   7.4 &    0.0 &    0.0 &    0.0 &    0.1 &    0.2 &    0.2 \\
             ca-GrQc &    4158 &   6.5 &    0.0 &    0.0 &    0.0 &    0.1 &    0.1 &    0.1 \\
           wiki-Vote &    7066 &  28.5 &    0.0 &    0.0 &    0.0 &    0.2 &    0.2 &    0.2 \\
            ca-HepTh &    8638 &   5.7 &    0.0 &    0.0 &    0.0 &    0.1 &    0.2 &    0.2 \\
            ca-HepPh &   11204 &  21.0 &    0.0 &    0.0 &    0.0 &    0.4 &    0.4 &    0.4 \\
           Stanford3 &   11586 &  98.1 &    0.2 &    0.2 &    0.2 &    0.6 &    0.7 &    0.7 \\
          ca-AstroPh &   17903 &  22.0 &    0.1 &    0.1 &    0.1 &    0.5 &    0.5 &    0.7 \\
          ca-CondMat &   21363 &   8.5 &    0.0 &    0.0 &    0.1 &    0.4 &    0.5 &    0.5 \\
         email-Enron &   33696 &  10.7 &    0.1 &    0.1 &    0.1 &    1.1 &    1.2 &    1.3 \\
       soc-Epinions1 &   75877 &  10.7 &    0.2 &    0.2 &    0.2 &    2.8 &    3.2 &    3.7 \\
    soc-Slashdot0811 &   77360 &  12.1 &    0.2 &    0.2 &    0.2 &    2.6 &    2.8 &    3.4 \\
               arxiv &   86376 &  12.0 &    0.2 &    0.3 &    0.3 &    4.8 &    6.0 &    6.5 \\
                dblp &   93156 &   3.8 &    0.1 &    0.1 &    0.2 &    3.0 &    3.2 &    3.4 \\
         email-EuAll &  224832 &   3.0 &    0.3 &    0.4 &    0.4 &   11.2 &   14.2 &   17.2 \\
             flickr2 &  513969 &  12.4 &    1.3 &    1.7 &    1.8 &   54.8 &   60.0 &   69.8 \\
      hollywood-2009 & 1069126 & 105.3 &   16.5 &   17.0 &   17.4 &  199.2 &  246.0 &  272.5 \\
\bottomrule
\end{tabularx}
\end{table}

\begin{table}[p]
\caption{Runtime of the column-wise Katz algorithm.  The ``0.0'' second
entries are rounded down for display.  These are really just less than 0.1 seconds.
The three columns show the 25th, 50th, and
75th percentiles of the wall-clock time of the experiments
described in Section~\ref{sec:runtime}.  
}
\label{tab:katz-columnwise-runtime}
\footnotesize
\begin{tabularx}{\linewidth}{lXXXXXXXX}
\toprule
Graph & Verts. & Avg. & \multicolumn{3}{l}{Random columns} & \multicolumn{3}{l}{Degree columns} \\
& & Deg. & \multicolumn{3}{l}{runtime (sec.)} & \multicolumn{3}{l}{runtime (sec.)} \\
\cmidrule{4-9}
& & & 25\% & Median & 75\% & 25\% & Median & 75\%\\
\midrule
               tapir &    1024 &   5.6 &    0.0 &    0.0 &    0.0 &    0.0 &    0.0 &    0.0 \\
     stanford-cs-sym &    2759 &   7.4 &    0.0 &    0.0 &    0.0 &    0.0 &    0.0 &    0.0 \\
             ca-GrQc &    4158 &   6.5 &    0.0 &    0.0 &    0.0 &    0.0 &    0.0 &    0.0 \\
           wiki-Vote &    7066 &  28.5 &    0.0 &    0.0 &    0.4 &    0.4 &    0.4 &    0.4 \\
            ca-HepTh &    8638 &   5.7 &    0.0 &    0.0 &    0.0 &    0.0 &    0.0 &    0.0 \\
            ca-HepPh &   11204 &  21.0 &    0.0 &    0.0 &    0.0 &    1.1 &    1.1 &    1.1 \\
           Stanford3 &   11586 &  98.1 &    0.0 &    0.0 &    1.7 &    1.8 &    1.9 &    1.9 \\
          ca-AstroPh &   17903 &  22.0 &    0.0 &    0.0 &    0.0 &    0.6 &    0.7 &    0.9 \\
          ca-CondMat &   21363 &   8.5 &    0.0 &    0.0 &    0.0 &    0.1 &    0.1 &    0.1 \\
         email-Enron &   33696 &  10.7 &    0.0 &    0.0 &    0.0 &    0.9 &    1.0 &    1.1 \\
       soc-Epinions1 &   75877 &  10.7 &    0.0 &    0.0 &    0.0 &    3.7 &    4.1 &    4.5 \\
    soc-Slashdot0811 &   77360 &  12.1 &    0.0 &    0.0 &    0.0 &    2.4 &    2.8 &    3.7 \\
               arxiv &   86376 &  12.0 &    0.0 &    0.0 &    0.0 &    0.0 &    0.6 &    0.7 \\
                dblp &   93156 &   3.8 &    0.0 &    0.0 &    0.0 &    0.0 &    0.0 &    0.0 \\
         email-EuAll &  224832 &   3.0 &    0.0 &    0.0 &    0.0 &    1.1 &    1.7 &    2.5 \\
             flickr2 &  513969 &  12.4 &    0.0 &    0.0 &    0.0 &   11.5 &   52.6 &   55.5 \\
      hollywood-2009 & 1069126 & 105.3 &    0.0 &    0.0 &    0.0 &    0.3 &    0.4 &    0.4 \\
\bottomrule
\end{tabularx}
\end{table}			
\section{Conclusion and Discussion} \label{sec:conclude}

The goal of this manuscript is to estimate commute
times and Katz scores in a rapid fashion.
Let us summarize our contributions and experimental findings.
\begin{compactitem}
\item For the pair-wise commute time problem, we have implemented
Algorithm~\ref{alg:pairwise-commute}, based
on the relationship between the Lanczos process
and a quadrature rule (Section~\ref{sec:mmq}).
This algorithm uses a similar mechanism to that of 
conjugate gradient (CG).  It outperforms 
the latter in terms of total matrix-vector products, 
because it provides upper and lower bounds that allow for 
early termination, whereas CG does not provide an easy way of detecting convergence for a specific pairwise score.
\item For the pair-wize Katz problem, we have proposed
Algorithm~\ref{alg:pairwise-katz}, also based on the same
quadrature theory. This algorithm involves two simultaneous
Lanczos iterations.  In practice, this means 
more work per iteration than a simple approach based on
CG. 
A careful implementation of Algorithm~\ref{alg:pairwise-katz}
would merge the two Lanczos processes into a ``joint process''
and perform the matrix-vector products simultaneously.  In our
tests of this idea, we have found that the combined matrix-vector
product took only 1.5 times as long as a single matrix-vector
product.

\item For the column-wise commute time problem, we have investigated
a variation of the conjugate gradient method
that also provides an estimate of the diagonals of
the matrix inverse.  We have found that these estimates were
fairly crude approximations
of the commute time scores.  We have also investigated
whether the degree-based heuristic from \citet{vonLuxburg-2010-commute}
provides better information.  It indeed seems to perform better, which suggests
that the smallest elements of a column of the commute-time
matrix may not be a useful set of useful related nodes.
\item For the column-wise Katz algorithm, we have proposed
Algorithm~\ref{alg:katz-columnwise} based on
the techniques used for personalized PageRank computing.
The idea with these techniques is to exploit sparsity
in the solution vector itself to derive faster algorithms.
We have shown that this algorithm converged in two cases:
Remark~\ref{rem:katz-simple}, where we established
a precise convergence result, and Theorem~\ref{thm:katz-coordinate},
where we only established asymptotic convergence.
\end{compactitem}
We believe that these results paint a useful picture of the
strengths and limitations of our algorithms.
Here are a few possible directions for future work:

\paragraph{Alternatives for pair-wise Katz.}
First, there are alternatives to using the identity
$\vu^T f(\mZ) \vv = (1/4) (\vu + \vv)^T f(\mZ) (\vu + \vv)
- (1/4) (\vu - \vv)^T f(\mZ) (\vu - \vv)$ in the $\vu \not= \vv$
case.  The first alternative is based on the nonsymmetric Lanczos
process~\cite{Golub-2010-mmq}.  This approach still requires
two matrix-vector products per iteration, but it directly
estimates the bilinear form and also provides upper and lower bounds.
A concern with the nonsymmetric Lanczos process is that it is
possible to encounter degeneracies in the recurrence relationships
that stop the process short of convergence.  Another alternative
is based on the block Lanczos process~\cite{Golub-2010-mmq}.
However, this process does not yet offer upper and lower bounds.
 
\paragraph{A theoretical basis for the localization of Katz scores.}
The inspiration for the column-wise Katz algorithm were the highly successful
personalized PageRank algorithms.  The localization of these personalized PageRank
vectors was made precise in a theorem from \citet{andersen2006-local} that related the
personalized PageRank vector to cuts in the graph.  In brief, if there is a good cut
nearby a vertex, then the personalized PageRank vector will be localized on a few
vertices.  An interesting question is whether or not Katz matrices enjoy a similar
property.  We hope to investigate this question in the future.

\bibliographystyle{plainnat}
{\RaggedRight 
\bibliography{sigmod09}

\begin{thebibliography}{49}
\providecommand{\natexlab}[1]{#1}
\providecommand{\url}[1]{\texttt{#1}}
\expandafter\ifx\csname urlstyle\endcsname\relax
  \providecommand{\doi}[1]{doi: #1}\else
  \providecommand{\doi}{doi: \begingroup \urlstyle{rm}\Url}\fi

\bibitem[Acar et~al.(2009)Acar, Dunlavy, and Kolda]{Acar2009-link-prediction}
Evrim Acar, Daniel~M. Dunlavy, and Tamara~G. Kolda.
\newblock Link prediction on evolving data using matrix and tensor
  factorizations.
\newblock In \emph{ICDMW '09: Proceedings of the 2009 IEEE International
  Conference on Data Mining Workshops}, pages 262--269. IEEE Computer Society,
  2009.
\newblock ISBN 978-0-7695-3902-7.
\newblock \doi{10.1109/ICDMW.2009.54}.

\bibitem[Andersen et~al.(2006)Andersen, Chung, and Lang]{andersen2006-local}
Reid Andersen, Fan Chung, and Kevin Lang.
\newblock Local graph partitioning using {PageRank} vectors.
\newblock In \emph{Proc. of the 47th Annual IEEE Sym. on Found. of Comp. Sci.},
  2006.
\newblock URL \url{http://www.math.ucsd.edu/~fan/wp/localpartition.pdf}.

\bibitem[Benzi and Boito(2010)]{bb10}
Michele Benzi and Paola Boito.
\newblock Quadrature rule-based bounds for functions of adjacency matrices.
\newblock \emph{Linear Algebra and its Applications}, 433\penalty0
  (3):\penalty0 637--652, 2010.

\bibitem[Berkhin(2007)]{berkhin2007-bookmark}
Pavel Berkhin.
\newblock Bookmark-coloring algorithm for personalized {PageRank} computing.
\newblock \emph{Internet Math.}, 3\penalty0 (1):\penalty0 41--62, 2007.
\newblock URL \url{http://www.internetmathematics.org/volumes/3/1/Berkhin.pdf}.

\bibitem[Blackford et~al.(1996)Blackford, Choi, Cleary, Petitet, Whaley,
  Demmel, Dhillon, Stanley, Dongarra, Hammarling, Henry, and
  Walker]{Blackford-1996-scalapack}
Laura~Susan Blackford, J.~Choi, A.~Cleary, A.~Petitet, R.~C. Whaley, J.~Demmel,
  I.~Dhillon, K.~Stanley, J.~Dongarra, S.~Hammarling, G.~Henry, and D.~Walker.
\newblock Scalapack: a portable linear algebra library for distributed memory
  computers - design issues and performance.
\newblock In \emph{Proceedings of the 1996 ACM/IEEE conference on
  Supercomputing (CDROM)}, Supercomputing '96, Washington, DC, USA, 1996. IEEE
  Computer Society.
\newblock ISBN 0-89791-854-1.
\newblock \doi{10.1145/369028.369038}.

\bibitem[Boldi and Vigna(2004)]{boldi2004-webgraph}
Paolo Boldi and Sebastiano Vigna.
\newblock The {Webgraph} {Framework} {I}: Compression techniques.
\newblock In \emph{Proceedings of the 13th international conference on the
  World Wide Web}, pages 595--602, New York, NY, USA, 2004. ACM Press.
\newblock ISBN 1-58113-844-X.
\newblock \doi{10.1145/988672.988752}.

\bibitem[Boldi et~al.(2011)Boldi, Rosa, Santini, and Vigna]{Boldi-2011-layered}
Paolo Boldi, Marco Rosa, Massimo Santini, and Sebastiano Vigna.
\newblock Layered label propagation: A multiresolution coordinate-free ordering
  for compressing social networks.
\newblock In \emph{Proceedings of the 20th WWW2011}, pages 587--596, March
  2011.

\bibitem[Chantas et~al.(2008)Chantas, Galatsanos, Likas, and
  Saunders]{Chantas-2008-cgdiag}
G.~Chantas, N.~Galatsanos, A.~Likas, and M.~Saunders.
\newblock Variational bayesian image restoration based on a product of
  -distributions image prior.
\newblock \emph{Image Processing, IEEE Transactions on}, 17\penalty0
  (10):\penalty0 1795--1805, October 2008.
\newblock ISSN 1057-7149.
\newblock \doi{10.1109/TIP.2008.2002828}.

\bibitem[Chung et~al.(2003)Chung, Lu, and Vu]{Chung-2003-Eigenvalues}
Fan Chung, Linyuan Lu, and Van Vu.
\newblock Eigenvalues of random power law graphs.
\newblock \emph{Annals of Combinatorics}, 7\penalty0 (1):\penalty0 21--33,
  2003.
\newblock URL \url{http://www.springerlink.com/content/0wjwr2jd0bg9lmqy/}.

\bibitem[Davis and Rabinowitz(1984)]{davis1984-integration}
P.~J. Davis and P.~Rabinowitz.
\newblock \emph{Methods of Numerical Integration}.
\newblock Academic Press, New York, 2nd edition, 1984.

\bibitem[Davis and Hu(2010)]{Davis2011-matrix}
Timothy~A. Davis and Yifan Hu.
\newblock The university of florida sparse matrix collection.
\newblock ACM Transactions on Mathematical Software, 2010.
\newblock URL \url{http://www.cise.ufl.edu/research/sparse/matrices/}.
\newblock To appear.

\bibitem[Farkas et~al.(2001)Farkas, Der\'enyi, Barab\'asi, and
  Vicsek]{Farkas-2001-spectra}
Ill\'es~J. Farkas, Imre Der\'enyi, Albert-L\'aszl\'o Barab\'asi, and Tam\'as
  Vicsek.
\newblock Spectra of ``real-world'' graphs: Beyond the semicircle law.
\newblock \emph{Phys. Rev. E}, 64\penalty0 (2):\penalty0 026704, July 2001.
\newblock \doi{10.1103/PhysRevE.64.026704}.

\bibitem[Foster et~al.(2001)Foster, Muth, Potterat, and Rothenberg]{foster+01}
Kurt~C. Foster, Stephen~Q. Muth, John~J. Potterat, and Richard~B. Rothenberg.
\newblock A faster {Katz} status score algorithm.
\newblock \emph{Comput. \& Math. Organ. Theo.}, 7\penalty0 (4):\penalty0
  275--285, 2001.

\bibitem[Fouss et~al.(2007)Fouss, Pirotte, Renders, and Saerens]{fouss+07}
Fran\c{c}ois Fouss, Alain Pirotte, Jean-Michel Renders, and Marco Saerens.
\newblock Random-walk computation of similarities between nodes of a graph with
  application to collaborative recommendation.
\newblock \emph{IEEE Trans. Knowl. Data Eng.}, 19\penalty0 (3):\penalty0
  355--369, 2007.

\bibitem[Gilbert and Teng(2002)]{Gilbert-2002-meshpart}
John~R. Gilbert and Shang-Hua Teng.
\newblock {MESHPART}: Matlab mesh partitioning and graph separator toolbox.
\newblock \url{http://www.cerfacs.fr/algor/Softs/MESHPART/}, February 2002.
\newblock URL \url{http://www.cerfacs.fr/algor/Softs/MESHPART/}.

\bibitem[G\"{o}bel and Jagers(1974)]{Gobel1974-random-walks}
F.~G\"{o}bel and A.~A. Jagers.
\newblock Random walks on graphs.
\newblock \emph{Stochastic Processes and their Applications}, 2\penalty0
  (4):\penalty0 311--336, 1974.
\newblock ISSN 0304-4149.
\newblock \doi{10.1016/0304-4149(74)90001-5}.
\newblock URL
  \url{http://www.sciencedirect.com/science/article/B6V1B-45FCH5T-Y/2/6a06d2db0d543df84aced9e88a125176}.

\bibitem[Golub and Meurant(1994)]{Golub93}
G.~H. Golub and G{\'e}rard Meurant.
\newblock Matrices, moments and quadrature.
\newblock In \emph{Numerical analysis 1993 ({D}undee, 1993)}, volume 303 of
  \emph{Pitman Res. Notes Math. Ser.}, pages 105--156. Longman Sci. Tech.,
  Harlow, 1994.

\bibitem[Golub and Meurant(1997)]{Golub97}
Gene~H. Golub and G\'{e}rard Meurant.
\newblock Matrices, moments and quadrature ii; how to compute the norm of the
  error in iterative methods.
\newblock \emph{BIT Num. Math.}, 37\penalty0 (3):\penalty0 687--705, 1997.

\bibitem[Golub and Meurant(2010)]{Golub-2010-mmq}
Gene~H. Golub and G\'{e}rard Meurant.
\newblock \emph{Matrices, Moments, and Quadrature with Applications}.
\newblock Princeton Series in Applied Mathematics. Princeton University Press,
  2010.

\bibitem[Hirai et~al.(2000)Hirai, Raghavan, Garcia-Molina, and
  Paepcke]{hirai2000-webbase}
Jun Hirai, Sriram Raghavan, Hector Garcia-Molina, and Andreas Paepcke.
\newblock Webbase: a repository of web pages.
\newblock \emph{Computer Networks}, 33\penalty0 (1-6):\penalty0 277--293, June
  2000.
\newblock \doi{10.1016/S1389-1286(00)00063-3}.

\bibitem[Jeh and Widom(2002)]{jw02}
Glen Jeh and Jennifer Widom.
\newblock Simrank: a measure of structural-context similarity.
\newblock In \emph{Proc. of the 8th ACM Intl. Conf. on Know. Discov. and Data
  Mining (KDD'02)}, 2002.

\bibitem[Katz(1953)]{katz}
Leo Katz.
\newblock A new status index derived from sociometric analysis.
\newblock \emph{Psychometrika}, 18:\penalty0 39--43, 1953.

\bibitem[Lanczos(1950)]{lanczos1950-iteration}
Cornelius Lanczos.
\newblock An iteration method for the solution of the eigenvalue problem of
  linear differential and integral operators.
\newblock \emph{Journal of Research of the National Bureau of Standards},
  45\penalty0 (4):\penalty0 255--282, October 1950.
\newblock URL
  \url{http://nvl.nist.gov/pub/nistpubs/jres/045/4/V45.N04.A01.pdf}.

\bibitem[Lanczos(1952)]{lanczos1952-linear-system}
Cornelius Lanczos.
\newblock Solution of systems of linear equations by minimized iterations.
\newblock \emph{Journal of Research of the National Bureau of Standards},
  49\penalty0 (1):\penalty0 33--53, July 1952.
\newblock URL
  \url{http://nvl.nist.gov/pub/nistpubs/jres/049/1/V49.N01.A06.pdf}.
\newblock Research paper 2341.

\bibitem[Lehoucq et~al.(1997)Lehoucq, Sorensen, and Yang]{lehoucq1997-arpack}
R.~B. Lehoucq, D.~C. Sorensen, and C.~Yang.
\newblock \emph{ARPACK User's Guide: Solution of Large Scale Eigenvalue
  Problems by Implicitly Restarted Arnoldi Methods}, October 1997.
\newblock URL \url{http://www.caam.rice.edu/software/ARPACK/SRC/ug.ps.gz}.

\bibitem[Leskovec(2010)]{Leskovec2010-snap}
Jure Leskovec.
\newblock The stanford large network dataset collection.
\newblock \url{http://snap.stanford.edu/data/index.html}, 2010.

\bibitem[Leskovec et~al.(2007)Leskovec, Kleinberg, and
  Faloutsos]{Leskovec-2007-densification}
Jure Leskovec, Jon Kleinberg, and Christos Faloutsos.
\newblock Graph evolution: Densification and shrinking diameters.
\newblock \emph{ACM Trans. Knowl. Discov. Data}, 1:\penalty0 1--41, March 2007.
\newblock ISSN 1556-4681.
\newblock \doi{10.1145/1217299.1217301}.

\bibitem[Leskovec et~al.(2009)Leskovec, Lang, Dasgupta, and
  Mahoney]{Leskovec-2009-community-structure}
Jure Leskovec, Kevin~J. Lang, Anirban Dasgupta, and Michael~W. Mahoney.
\newblock Community structure in large networks: Natural cluster sizes and the
  absence of large well-defined clusters.
\newblock \emph{Internet Mathematics}, 6\penalty0 (1):\penalty0 29--123,
  September 2009.
\newblock \doi{10.1080/15427951.2009.10129177}.

\bibitem[Leskovec et~al.(2010)Leskovec, Huttenlocher, and
  Kleinberg]{Leskovec-2010-signed}
Jure Leskovec, Daniel Huttenlocher, and Jon Kleinberg.
\newblock Signed networks in social media.
\newblock In \emph{Proceedings of the 28th international conference on Human
  factors in computing systems}, CHI '10, pages 1361--1370, New York, NY, USA,
  2010. ACM.
\newblock ISBN 978-1-60558-929-9.
\newblock \doi{10.1145/1753326.1753532}.

\bibitem[Li et~al.(2010)Li, Liu, Yu, He, and Du]{li-etal-sdm10}
Pei Li, Hongyan Liu, Jeffrey~Xu Yu, Jun He, and Xiaoyong Du.
\newblock Fast single-pair simrank computation.
\newblock In \emph{Proc. of the SIAM Intl. Conf. on Data Mining (SDM2010)},
  Columbus, OH, 2010.

\bibitem[Liben-Nowell and Kleinberg(2003)]{cikm03}
David Liben-Nowell and Jon~M. Kleinberg.
\newblock The link prediction problem for social networks.
\newblock In \emph{Proc. of the ACM Intl. Conf. on Inform. and Knowlg. Manage.
  (CIKM'03)}, 2003.

\bibitem[Luo and Tseng(1992)]{Luo1992-coordinate-descent}
Z.~Q. Luo and P.~Tseng.
\newblock On the convergence of the coordinate descent method for convex
  differentiable minimization.
\newblock \emph{Journal of Optimization Theory and Applications}, 72\penalty0
  (1):\penalty0 7--35, 1992.
\newblock ISSN 0022-3239.
\newblock \doi{10.1007/BF00939948}.
\newblock 10.1007/BF00939948.

\bibitem[McSherry(2005)]{mcsherry2005-uniform}
Frank McSherry.
\newblock A uniform approach to accelerated {PageRank} computation.
\newblock In \emph{Proc. of the 14th Intl. Conf. on the WWW}, pages 575--582,
  New York, NY, USA, 2005. ACM Press.
\newblock ISBN 1-59593-046-9.
\newblock \doi{10.1145/1060745.1060829}.

\bibitem[Mihail and Papadimitriou(2002)]{mihail2002-eigenvalue-power-law}
Milena Mihail and Christos~H. Papadimitriou.
\newblock On the eigenvalue power law.
\newblock In \emph{RANDOM '02: Proceedings of the 6th International Workshop on
  Randomization and Approximation Techniques}, pages 254--262, London, UK,
  2002. Springer-Verlag.
\newblock ISBN 3-540-44147-6.
\newblock \doi{10.1007/3-540-45726-7_20}.

\bibitem[O'Leary(1980)]{OLeary-1980-block-cg}
Dianne~P. O'Leary.
\newblock The block conjugate gradient algorithm and related methods.
\newblock \emph{Linear Algebra and its Applications}, 29:\penalty0 293 -- 322,
  1980.
\newblock ISSN 0024-3795.
\newblock \doi{10.1016/0024-3795(80)90247-5}.
\newblock URL
  \url{http://www.sciencedirect.com/science/article/B6V0R-45GWM37-T/2/61817ca45178e6911a3d537c05b4b1fb}.

\bibitem[Page et~al.(1999)Page, Brin, Motwani, and Winograd]{page1999-pagerank}
Lawrence Page, Sergey Brin, Rajeev Motwani, and Terry Winograd.
\newblock The {PageRank} citation ranking: Bringing order to the web.
\newblock Technical Report 1999-66, Stanford University, November 1999.
\newblock URL \url{http://dbpubs.stanford.edu:8090/pub/1999-66}.

\bibitem[Paige and Saunders(1975)]{Paige-1975-indefinite}
C.~C. Paige and M.~A. Saunders.
\newblock Solution of sparse indefinite systems of linear equations.
\newblock \emph{SIAM Journal on Numerical Analysis}, 12\penalty0 (4):\penalty0
  617--629, 1975.
\newblock \doi{10.1137/0712047}.
\newblock URL \url{http://link.aip.org/link/?SNA/12/617/1}.

\bibitem[Rattigan and Jensen(2005)]{rattigan05}
Matthew~J. Rattigan and David Jensen.
\newblock The case for anomalous link discovery.
\newblock \emph{SIGKDD Explor. Newsl.}, 7\penalty0 (2):\penalty0 41--47, 2005.

\bibitem[Richardson et~al.(2003)Richardson, Agrawal, and
  Domingos]{Richardson-2003-trust}
Matthew Richardson, Rakesh Agrawal, and Pedro Domingos.
\newblock Trust management for the semantic web.
\newblock In \emph{The SemanticWeb - ISWC 2003}, volume 2870 of \emph{Lecture
  Notes in Computer Science}, pages 351--368. Springer Berlin / Heidelberg,
  2003.
\newblock \doi{10.1007/978-3-540-39718-2_23}.

\bibitem[Saerens et~al.(2004)Saerens, Fouss, Yen, and Dupont]{saerens}
Marco Saerens, Fran\c{c}ois Fouss, Luh Yen, and Pierre Dupont.
\newblock The principal components analysis of a graph, and its relationships
  to spectral clustering.
\newblock In \emph{Proc. of the 15th Euro. Conf. on Mach. Learn.}, 2004.

\bibitem[Sarkar and Moore(2007)]{sarkar-moore07}
Purnamrita Sarkar and Andrew~W. Moore.
\newblock A tractable approach to finding closest truncated-commute-time
  neighbors in large graphs.
\newblock In \emph{Proc. of the 23rd Conf. on Uncert. in Art. Intell.
  (UAI'07)}, 2007.

\bibitem[Sarkar et~al.(2008)Sarkar, Moore, and Prakash]{sarkar08}
Purnamrita Sarkar, Andrew~W. Moore, and Amit Prakash.
\newblock Fast incremental proximity search in large graphs.
\newblock In \emph{Proc. of the 25th Intl. Conf. on Mach. Learn. (ICML'08)},
  2008.

\bibitem[Saunders(2007)]{Saunders-2007-cgLanczos}
Michael Saunders.
\newblock {cgLanczos}: {CG} method for positive definite ax=b.
\newblock \url{http://www.stanford.edu/group/SOL/software/cgLanczos.html},
  accessed on 2011-03-25, 2007.

\bibitem[Spielman and Srivastava(2008)]{spiel08}
Daniel~A. Spielman and Nikhil Srivastava.
\newblock Graph sparsification by effective resistances.
\newblock In \emph{Proc. of the 40th Ann. ACM Symp. on Theo. of Comput.
  (STOC'08)}, pages 563--568, 2008.

\bibitem[Traud et~al.(2011)Traud, Mucha, and Porter]{Traud-2011-facebook}
Amanda~L. Traud, Peter~J. Mucha, and Mason~A. Porter.
\newblock Social structure of facebook networks.
\newblock \emph{arXiv}, cs.SI:\penalty0 1102.2166, February 2011.
\newblock URL \url{http://arxiv.org/abs/1102.2166}.

\bibitem[Varga(1962)]{v62}
R.S. Varga.
\newblock \emph{Matrix Iterative Analysis}.
\newblock Prentice-Hall, 1962.

\bibitem[von Luxburg et~al.(2010)von Luxburg, Radl, and
  Hein]{vonLuxburg-2010-commute}
Ulrike von Luxburg, Agnes Radl, and Matthias Hein.
\newblock Getting lost in space: Large sample analysis of the commute distance.
\newblock In \emph{Advances in Neural Information Processing Systems 24
  (NIPS2010)}, 2010.
\newblock URL \url{http://books.nips.cc/papers/files/nips23/NIPS2010_0929.pdf}.

\bibitem[Wang et~al.(2007)Wang, Satuluri, and
  Parthasarathy]{Wang2007-link-prediction}
Chao Wang, Venu Satuluri, and Srinivasan Parthasarathy.
\newblock Local probabilistic models for link prediction.
\newblock In \emph{ICDM '07: Proceedings of the 2007 Seventh IEEE International
  Conference on Data Mining}, pages 322--331, Washington, DC, USA, December
  2007. IEEE Computer Society.
\newblock ISBN 0-7695-3018-4.
\newblock \doi{10.1109/ICDM.2007.108}.

\bibitem[Yen et~al.(2007)Yen, Fouss, Decaestecker, Francq, and Saerens]{yen+07}
L.~Yen, F.~Fouss, C.~Decaestecker, P.~Francq, and M.~Saerens.
\newblock Graph nodes clustering based on the commute-time kernel.
\newblock In \emph{Proc. of the 11th Pacific-Asia Conf. on Knowled. Disc. and
  Data Mining (PAKDD 2007)}. Lecture Notes in Computer Science (LNCS), 2007.

\end{thebibliography}
}

\let\thefootnote\relax
\footnotetext{
       Sandia National Laboratories is a multi-program laboratory
       managed and operated by Sandia Corporation, a wholly owned
       subsidiary of Lockheed Martin Corporation, for the U.S.
       Department of Energy's National Nuclear Security Administration
       under contract DE-AC04-94AL85000.}
 
\end{document}